\documentclass[11pt]{article}

\usepackage[T1]{fontenc}
\usepackage{lmodern}
\usepackage[a4paper,margin=2.5cm]{geometry}

\usepackage{amsmath,amssymb,amsthm}
\usepackage{mathtools}

\usepackage{braket}
\usepackage{qcircuit}

\usepackage{graphicx}
\usepackage{placeins}
\usepackage{booktabs}

\usepackage{microtype}
\usepackage[hidelinks]{hyperref}

\newtheorem{theorem}{Theorem}[section]
\newtheorem{lemma}[theorem]{Lemma}

\newtheorem{corollary}[theorem]{Corollary}

\theoremstyle{definition}
\newtheorem{definition}[theorem]{Definition}

\theoremstyle{remark}
\newtheorem{remark}[theorem]{Remark}

\title{A Quantum Collocation Approach to One-Dimensional Boundary Value Problems with Coherent Amplitude Amplification}

\author{
Daniel Jaroszewski\textsuperscript{1,2}
\thanks{\texttt{daniel.jaroszewski@frankfurtconsultingengineers.de}}
\and
Bastian Harrach\textsuperscript{1}
\thanks{\texttt{harrach@math.uni-frankfurt.de}}
}

\date{}

\begin{document} \maketitle
    \begin{center} \small $^{1}$ Institute of Mathematics, Goethe University Frankfurt, Germany $^{2}$
        FCE Frankfurt Consulting Engineers GmbH, Frankfurt am Main, Germany \end{center}
        \begin{abstract}
            We propose a quantum collocation framework for approximating solutions of one-dimensional linear and
            nonlinear boundary value problems. The method formulates the search for admissible solutions as a
            residual-based quantum search over a discretized ansatz space, where candidate solutions are
            evaluated through residual conditions imposed at collocation points.

            A residual-threshold oracle is constructed that acts jointly on spatial and parameter registers.
            This joint oracle structure leads to amplification dynamics that decompose into a coherent
            superposition of spatially conditioned amplitude-amplification processes rather than a single global
            amplification mechanism.

            We derive the corresponding amplification geometry and show that the success probability is governed
            by a weighted combination of spatially dependent amplification angles. Furthermore, we prove that
            the reversible residual oracle can be implemented with gate complexity polynomial in the logarithm
            of the number of collocation points, while retaining the quadratic search acceleration associated
            with amplitude amplification in the parameter space.

            We analyze how the spatially dependent oracle structure influences the amplification dynamics and
            corresponding success probabilities. Furthermore, we investigate how discretization, ansatz
            expressivity, oracle tolerance, and finite-precision effects influence both approximation quality
            and amplification behavior. Numerical experiments validate the theoretical predictions and
            illustrate the resulting search dynamics across different discretization and precision regimes.
        \end{abstract}
        \section{Introduction}
        Boundary value problems (BVPs) for ordinary differential equations (ODEs) arise throughout applied
        mathematics, physics, and engineering. Classical numerical approaches such as shooting methods,
        finite-difference schemes, finite-element methods, and spectral discretizations provide
        well-established tools for solving such problems~\cite{AscherPetzold:1998,
        Leveque:2007,StoerBulirsch:2002, BrennerScott:2008, Hesthaven:2007}. In shooting methods, each
        residual evaluation requires solving an initial-value problem, with cost typically growing linearly
        with the number of integration steps. Finite-difference, finite-element, and spectral methods
        instead lead to algebraic systems whose cost depends on the number of degrees of freedom and the
        solver, ranging from linear to superlinear.

        Quantum algorithms for differential equations have often been developed around quantum linear-system
        algorithms (QLSAs), in which a discretized differential equation is reduced to a linearsystem and
        the solution is encoded in a quantum state. Foundational examples include the HHL
        algorithm~\cite{HHL2009} and later improvements by Childs, Kothari, and
        Somma~\cite{ChildsKothariSomma2017}. QLSA-based approaches have also been developed for
        finite-element discretizations and elliptic partial differential
        equations~\cite{MontanaroPallister2016, ChildsLiuOstrander2021}.

        Another direction uses Hamiltonian simulation or dynamical embeddings to represent differential
        equations as quantum dynamics. This includes quantum spectral methods~\cite{ChildsLiu2020},
        Schr\"odingerisation techniques~\cite{JinLiuYu2022, JinLiuYu2023}, and Hamiltonian-simulation-based
        solvers for partial differential equations~\cite{SatoKadowaki2024}. For nonlinear differential
        equations, quantum algorithms based on Carleman linearization and Koopman-type embeddings have also
        been developed~\cite{LiuEtAl2021, Krovi2023, JinLiuYu2023b}.

        More recently, physics-informed neural networks (PINNs) have provided a residual-based framework for
        approximating solutions of differential equations~\cite{Raissi:2019, DeRyck:2022,
        DeRyckMishra2024Survey}. Related quantum and hybrid quantum--classical approaches replace or augment
        the classical neural network by a parameterized quantum circuit~\cite{Berger2025, Panichi2026}.
        These methods retain the residual-minimization principle of PINNs and typically train the circuit
        parameters using a classical optimization procedure.

        In this work, we pursue a different approach based on collocation and quantum amplitude
        amplification. Instead of solving a discretized linear system or simulating a dynamical evolution
        problem, we formulatethe solution process as a residual-driven quantum search problem over a
        discretized ansatz space. Candidate solutions are evaluated through residual conditions imposed at
        collocation points, and amplitudeamplification is used to enhance parameter configurations
        satisfying a prescribed residual tolerance.

        A central feature of the proposed framework is that residual evaluation is performed coherently
        across both the collocation grid and the parameter space. This distinguishes the construction from a
        conventional Grover search in which each parameter configuration receives a single binary label.
        Instead, the oracle response reflects how the residual criterion is distributed across the spatial
        domain. Consequently, the amplification process retains information about where a candidate
        approximation satisfies or violates the differential equation.

        The numerical behavior of the method is governed by the collocation grid, ansatz expressivity,
        residual tolerance, and arithmetic precision. These factors determine both the attainable
        approximation accuracy andthe oracle structure seen by amplitude amplification. We analyze their
        influence theoretically and through numerical experiments.

        The remainder of the paper is organized as follows. Section~\ref{sec:problem_formulation} introduces
        the mathematical formulation of this special boundary value problem. Section~\ref{sec:overview}
        summarizes the main contributions of this work. Section~\ref{sec:quantum_collocation_method}
        develops the quantum collocation method, including the reversible residual oracle andthe associated
        amplification dynamics. Section~\ref{sec:oracle_cost_runtime} analyzes the oracle complexity and
        compares the resulting runtime behavior with standard Grover amplification.
        Section~\ref{sec:experimental_setup} and \ref{sec:experimental_analysis} present numerical
        experiments and implementation studies illustrating the influence of discretization, oracle
        tolerance, finite precision, and ansatz expressivity on the search dynamics. Finally, we conclude
        the paper in Section~\ref{sec:conclusion}.
        \section{Problem Formulation}
        \label{sec:problem_formulation} We consider boundary value problems
        (BVPs) of the form
        \begin{equation}
            u''(x) + f(x,u(x)) = 0, \quad x \in [0,1],
        \end{equation}
        subject to homogeneous boundary conditions
        \begin{equation}
            u(0) = u(1) = 0.
        \end{equation}
        Here, \(f : [0,1] \times \mathbb{R} \to \mathbb{R}\) is a given (possibly nonlinear) function. The
        objective is to determine a function \(u^\ast : [0,1] \to \mathbb{R}\) that satisfies the differential
        equation together with the boundary conditions.

        In general, closed-form solutions are not available, and one must resort to numerical approximation.
        In this work, we adopt a collocation-based perspective. Rather than solving the differential
        equation directly, we seek an approximate solution within a parametrized ansatz class of the form
        \(u(x,w)\), where \(w \in W_m\) denotes a finite-dimensional parameter vector.

        To quantify how well a candidate function satisfies the differential equation, we introduce the
        residual function
        \begin{equation}
            r(x,w) := u''(x,w) + f\bigl(x,u(x,w)\bigr),
        \end{equation}
        which
        measures the local deviation from the governing equation at a point \(x\).

        The boundary value problem can then be reformulated as the task of identifying parameters \(w\) such
        that the residual is sufficiently small over the domain while the boundary conditionsare satisfied.
        In practice, this condition is enforced on a finite set of collocation points, leading to a discrete
        approximation of the continuous problem.

        This formulation provides the basis for the quantum collocation method developed in the following
        sections.
        \section{Main Results}
        \label{sec:overview} The main contributions of this work are summarized as
        follows.
        \begin{itemize}
            \item We formulate one-dimensional boundary value problems as residual-based quantum search problems
            over discretized parameter spaces and construct a reversible residual-threshold oracle
            (Definition~\ref{def_final_oracle}) acting jointly on spatial and parameter registers.
            Definition~\ref{def_oracle_score} and Lemma~\ref{lemma:weighted_oracle_expectation} show that the
            resulting oracle induces weighted rather than binary amplification, where the response of a
            parameter configuration \(w_j\) depends on the fraction of collocation points satisfying the
            residual threshold condition.

            \item Theorem~\ref{theorem:coherent_amplitude_amplification} shows that the amplification dynamics
            decompose into a coherent superposition of spatially conditionedamplitude amplification rotations,
            rather than a single global amplification process. This provides a natural extension of the standard
            amplitude amplification perspective tospatially conditioned oracle responses, while preserving
            compatibility with the usual quadratic search mechanism in the parameter space. The resulting
            success probability is governed by the spatial distribution of the local oracle responses.

            \item Theorem~\ref{thm:oracle_cost} shows that, under a reversible implementation of the residual
            evaluation, the elementary gate complexity of the residual oracle scales polynomially in the spatial
            register size \( n_X=\log_2(N_X), \) rather than by an explicit linear enumeration of the \(N_X\)
            collocation points. Thus, one source of efficiency is the coherent evaluation of the residual over
            the spatial register. In the limiting case where the spatial dependence reduces to a standard binary
            marking of the parameterregister, Corollary~\ref{cor:standard_grover_limit} recovers the usual
            Grover scaling.

            \item Numerical experiments validate the theoretical amplification behavior and illustrate the
            influence of discretization, oracle tolerance, finite-precision effects, and ansatz expressivity on
            the residual-based search dynamics.
        \end{itemize}
        \section{Quantum Collocation Method}
        \label{sec:quantum_collocation_method} This section develops
        the quantum implementation of the residual-based collocation framework. We first introduce the
        register encodingsused to represent spatial points, parameters, and residual values. We then
        construct the residual-threshold oracle in compute--phase--uncompute form and analyze the induced
        amplification geometry.

        The construction is organized around the fact that the residual test is evaluated on pairs
        \((x_i,w_j)\). This joint dependence determines the oracle structure and is the source of the
        spatially conditioned amplification behavior derived below.
        \subsection{Quantum Registers and Encodings}
        We use a state register \(X\), a parameter register \(W\), and asigned value register \(Z\). Their
        computational basis states encode, respectively, collocation points \(x_i\in X\), discretized
        parameter vectors \(w_j\in W_m\), and fixed-point values \(z_k\in Z\):
        \[
            \ket{i}_X \leftrightarrow
            x_i, \qquad \ket{j}_W \leftrightarrow w_j, \qquad \ket{k}_Z \leftrightarrow z_k.
        \]
        The signed value
        encoding is used to store residuals and intermediate arithmetic results and, in particular, permits
        operations conditioned on the sign of the encoded value.

        Reversible arithmetic may additionally use an \(n_A\)-qubit ancilla register \(A\). The full
        computational Hilbert space is
        \[
            \mathcal H = \mathcal H_X \otimes \mathcal H_W \otimes \mathcal
            H_Z \otimes \mathcal H_A,
        \]
        where the value and ancilla registers are initialized, unless stated
        otherwise, in the clean state
        \[
            \ket{0}_Z\ket{0}_A.
        \]
        The register dimensions, computational bases, discretized sets, and fixed-point encodings are
        specified in Appendix~\ref{subsec_quantum_registers}. Other numerical representations may be used
        instead, with the usual trade-offs between qubit count, arithmetic depth, and numerical precision;
        see Refs.~\cite{Seidel:2021,Wang:2025}.
        \subsection{Residual Threshold Oracle}
        We consider ansatz functions
        \[
            u:X\times W_m\to Z
        \]
        belonging to a finite-dimensional polynomial
        ansatz space \(V\subseteq\mathcal F\), where \(\mathcal F\subseteq\{f:X\to Z\}\). Throughout this
        work, the ansatz is linear in the parameter vector \(w=(w^0,\ldots,w^{m-1})\in W_m\),
        \[
            u(x,w) =
            \sum_{\ell=0}^{m-1} w^\ell\phi_\ell(x), \qquad \phi_\ell(x) = \sum_{d=0}^{D_\ell}a_{\ell,d}x^d,
        \]
        where the polynomial basis functions \(\phi_\ell\in V\) are chosen to satisfy the prescribed
        boundary conditions. The complexity analysis below only requires that the basis functions admit such
        a polynomial representation. More generally, richer polynomial or trigonometric ansatz spaces may be
        employed. The numerical examples use carefully selected low-degree polynomial ansatz functions in
        order to isolate the effects under investigation while keeping the oracle implementation compact.

        The objective is to identify parameters \(w\in W_m\) for which \(u(\cdot,w)\) approximately
        satisfies the differential equationon the collocation set \(X\). We therefore introduce the residual
        map
        \[
            r:X\times W_m\to Z,
        \]
        where \(r(x_i,w_j)\) measures the local defect of the ansatz at the
        collocation point \(x_i\). For a prescribed tolerance \(\varepsilon_{\mathrm{tol}}\in Z\),
        \(\varepsilon_{\mathrm{tol}}>0\), define the comparison value
        \[
            c(x_i,w_j) = |r(x_i,w_j)| -
            \varepsilon_{\mathrm{tol}}.
        \]
        The residual threshold condition is
        \[
            c(x_i,w_j) < 0.
        \]
        We assume
        that the fixed-point discretization represents all residual, tolerance, and comparison values
        occurring in the computation. Hence, for every \((i,j)\in\mathcal I_X\times\mathcal I_W\), there
        exists an index \(\kappa(i,j)\in\mathcal I_Z\) such that
        \[
            z_{\kappa(i,j)} = c(x_i,w_j).
        \]
        The oracle is constructed by a compute--phase--uncompute procedure. First, the residual is evaluated
        reversibly, its absolute value is computed, and the tolerance is subtracted. This defines a unitary
        compute operator
        \[
            U_c: \mathcal H\to\mathcal H
        \]
        satisfying, on clean value and ancilla
        registers,
        \begin{equation}
            \label{def_computational_operator} U_c \ket{i}_X \ket{j}_W \ket{0}_Z
            \ket{0}_A = \ket{i}_X \ket{j}_W \ket{\kappa(i,j)}_Z \ket{\varphi(i,j)}_A,
        \end{equation}
        where
        \(z_{\kappa(i,j)}=c(x_i,w_j)\) and \(\ket{\varphi(i,j)}_A\) denotes the reversible work state.

        More explicitly, let
        \[
            U_r : \mathcal H_X\otimes\mathcal H_W\otimes\mathcal H_Z \to \mathcal
            H_X\otimes\mathcal H_W\otimes\mathcal H_Z
        \]
        denote the reversible residual-evaluation operator,
        \[
            U_a : \mathcal H_Z\otimes\mathcal H_A \to \mathcal H_Z\otimes\mathcal H_A
        \]
        the reversible
        absolute-value operator, and
        \[
            U_s : \mathcal H_Z \to \mathcal H_Z
        \]
        the operator that subtracts
        the prescribed tolerance. The comparison oracle then decomposes as
        \[
            U_c = (I_X\otimes I_W\otimes
            U_s\otimes I_A) (I_X\otimes I_W\otimes U_a) (U_r\otimes I_A).
        \]
        Thus, \(U_r\) evaluates the
        residual in the value register, \(U_a\) reversibly computes its absolute value using the ancilla
        register, and \(U_s\) subtracts the tolerance from the resulting value.

        Explicit reversible implementations of these oracle subroutines,together with the fixed-point
        arithmetic constructions underlying the complexity analysis, are provided in
        Appendix~\ref{section:reversible_implementations}. The admissible computational subspace is chosen
        so that this coincides with ordinary subtraction of \(\varepsilon_{\mathrm{tol}}\) and no
        wrap-around occurs.

        Define the residual-threshold predicate
        \[
            \chi:\mathcal I_X\times\mathcal I_W\to\{0,1\}, \qquad
            \chi(i,j) = \begin{cases} 1, & |r(x_i,w_j)|<\varepsilon_{\mathrm{tol}}, \\ 0, & \text{otherwise}.
        \end{cases}
    \]
    Let \(S_Z\) be the sign-test phase operator on the value register,
    \[
        S_Z\ket{k}_Z =
        (-1)^{\mathbf 1[z_k<0]} \ket{k}_Z,
    \]
    and set
    \[
        U_{\mathrm{phase}} = I_X\otimes I_W\otimes
        S_Z\otimes I_A.
    \]
    Since the value register uses a signed representation, \(S_Z\) is implemented by
    applying a phase to the sign qubit.

    The induced phase-marking operator is
    \[
        S_\chi = U_c^\dagger U_{\mathrm{phase}} U_c.
    \]
    Using
    \eqref{def_computational_operator}, its action on clean computational states is
    \begin{equation}
        \label{def:selective_phase_operator} S_\chi \bigl( \ket{i}_X \ket{j}_W \ket{0}_Z \ket{0}_A \bigr) =
        (-1)^{\chi(i,j)} \ket{i}_X \ket{j}_W \ket{0}_Z \ket{0}_A .
    \end{equation}
    Thus \(S_\chi\) marks
    pairs \((x_i,w_j)\) according to the localresidual threshold condition.

    Equivalently, on the clean computational subspace
    \[
        \mathcal K = \operatorname{span} \left\{
        \ket{i}_X\ket{j}_W\ket{0}_Z\ket{0}_A : (i,j)\in\mathcal I_X\times\mathcal I_W \right\},
    \]
    one has
    the diagonal representation
    \[
        S_\chi\big|_{\mathcal K} = \sum_{i\in\mathcal I_X} \sum_{j\in\mathcal
        I_W} (-1)^{\chi(i,j)} \ket{i}\bra{i}_X \otimes \ket{j}\bra{j}_W \otimes \ket{0}\bra{0}_Z \otimes
        \ket{0}\bra{0}_A .
    \]
    The search space is the parameter register, whereas the spatial register is used coherently to
    evaluate the residual threshold over the collocation set. Let
    \[
        U_{\mathrm{sup},X}\ket{0}_X =
        \sum_{i\in\mathcal I_X} \alpha_i\ket{i}_X, \qquad \sum_{i\in\mathcal I_X}|\alpha_i|^2=1,
    \]
    be the
    spatial state-preparation unitary. In the uniform case, \( \alpha_i=\frac{1}{\sqrt{N_X}},
    i\in\mathcal I_X. \)
    \begin{definition}[Residual-threshold oracle]
        \label{def_final_oracle} The residual-threshold oracle
        is the unitary operator
        \[
            U_{\mathrm{oracle}} = \left( U_{\mathrm{sup},X}^{\dagger} \otimes
            I_W\otimes I_Z\otimes I_A \right) S_\chi \left( U_{\mathrm{sup},X} \otimes I_W\otimes I_Z\otimes I_A
            \right).
        \]
    \end{definition}
    For a parameter basis state \(\ket{j}_W\) and clean work registers,
    \begin{align}
        \label{eq:final_oracle_action} & U_{\mathrm{oracle}} \bigl( \ket{0}_X \ket{j}_W \ket{0}_Z \ket{0}_A
        \bigr) \nonumber                                                                                    \\ & \quad = \left( U_{\mathrm{sup},X}^{\dagger} \otimes I_W\otimes I_Z\otimes I_A
        \right) \sum_{i\in\mathcal I_X} \alpha_i (-1)^{\chi(i,j)} \ket{i}_X \ket{j}_W \ket{0}_Z \ket{0}_A .
    \end{align}
    Hence the oracle coherently accumulates residual-threshold phaseinformation over the
    spatial superposition and then maps the spatial register back to the reference state.

    The circuit realization of one amplification iteration is shown in Figure~\ref{fig:qcircuit_U_c}.
    The central block implements \(S_\chi=U_c^\dagger U_{\mathrm{phase}}U_c\), while the diffusion
    operator acts only on the parameter register.
    \begin{figure*}[t]
        \centering \resizebox{\textwidth}{!}{
        \Qcircuit @C=1.0em @R=0.2em @!R { \\ \nghost{{W}_{0} : } & \lstick{{W}_{0} : } & \gate{\mathrm{H}}
        \barrier[0em]{17} & \qw & \qw & \multigate{15}{\mathrm{U_r}}_<<<{} & \qw & \qw & \qw & \qw & \qw &
        \multigate{15}{\mathrm{U_r^{\dagger}}}_<<<{} & \qw \barrier[0em]{17} & \qw & \gate{\mathrm{H}} &
        \gate{\mathrm{X}} & \qw & \ctrl{1} & \gate{\mathrm{X}} & \gate{\mathrm{H}} & \qw & \meter & \qw &
        \qw & \qw & \qw\\ \nghost{{W}_{1} : } & \lstick{{W}_{1} : } & \gate{\mathrm{H}} & \qw & \qw &
        \ghost{\mathrm{U_r}}_<<<{} & \qw & \qw & \qw & \qw &\qw & \ghost{\mathrm{U_r^{\dagger}}}_<<<{} & \qw
        & \qw & \gate{\mathrm{H}} & \gate{\mathrm{X}} & \qw & \ctrl{1} & \gate{\mathrm{X}} &
        \gate{\mathrm{H}} & \qw & \qw & \meter & \qw& \qw & \qw\\ \nghost{{W}_{2} : } & \lstick{{W}_{2} : }
        & \gate{\mathrm{H}} & \qw & \qw & \ghost{\mathrm{U_r}}_<<<{} & \qw & \qw & \qw & \qw &\qw &
        \ghost{\mathrm{U_r^{\dagger}}}_<<<{} & \qw & \qw & \gate{\mathrm{H}} & \gate{\mathrm{X}} &
        \gate{\mathrm{H}} & \targ & \gate{\mathrm{H}} & \gate{\mathrm{X}} & \gate{\mathrm{H}} & \qw & \qw &
        \meter & \qw & \qw\\ \nghost{{X}_{0} : } & \lstick{{X}_{0} : } & \qw & \qw &
        \multigate{2}{\mathrm{U_{\mathrm{sup,X}}}}_<<<{} & \ghost{\mathrm{U_r}}_<<<{} & \qw & \qw & \qw &
        \qw & \qw & \ghost{\mathrm{U_r^{\dagger}}}_<<<{} &
        \multigate{2}{\mathrm{U_{\mathrm{sup,X}}^{\dagger}}}_<<<{} & \qw& \qw & \qw & \qw & \qw & \qw & \qw
        & \qw & \qw & \qw & \qw & \qw & \qw\\ \nghost{{X}_{1} : } & \lstick{{X}_{1} : } & \qw & \qw &
        \ghost{\mathrm{U_{\mathrm{sup,X}}}}_<<<{} & \ghost{\mathrm{U_r}}_<<<{} & \qw & \qw & \qw & \qw & \qw
        & \ghost{\mathrm{U_r^{\dagger}}}_<<<{} & \ghost{\mathrm{U_{\mathrm{sup,X}}^{\dagger}}}_<<<{} & \qw &
        \qw & \qw & \qw & \qw & \qw & \qw & \qw & \qw & \qw & \qw & \qw & \qw\\ \nghost{{X}_{2} : } &
        \lstick{{X}_{2} : } & \qw & \qw & \ghost{\mathrm{U_{\mathrm{sup,X}}}}_<<<{} &
        \ghost{\mathrm{U_r}}_<<<{} & \qw & \qw & \qw & \qw & \qw & \ghost{\mathrm{U_r^{\dagger}}}_<<<{} &
        \ghost{\mathrm{U_{\mathrm{sup,X}}^{\dagger}}}_<<<{} & \qw & \qw & \qw & \qw & \qw & \qw & \qw & \qw
        & \qw & \qw & \qw & \qw & \qw\\ \nghost{{Z}_{0} : } & \lstick{{Z}_{0} : } & \qw & \qw & \qw &
        \ghost{\mathrm{U_r}}_<<<{} & \multigate{11}{\mathrm{U_a}}_<<<{} & \multigate{9}{\mathrm{U_s}}_<<<{}
        & \qw & \multigate{9}{\mathrm{U_s^{\dagger}}}_<<<{} & \multigate{11}{\mathrm{U_a^{\dagger}}}_<<<{} &
        \ghost{\mathrm{U_r^{\dagger}}}_<<<{} & \qw & \qw & \qw & \qw & \qw & \qw & \qw & \qw & \qw & \qw &
        \qw & \qw & \qw & \qw\\ \nghost{{Z}_{1} : } & \lstick{{Z}_{1} : } & \qw & \qw & \qw &
        \ghost{\mathrm{U_r}}_<<<{} & \ghost{\mathrm{U_a}}_<<<{} & \ghost{\mathrm{U_s}}_<<<{} & \qw &
        \ghost{\mathrm{U_s^{\dagger}}}_<<<{} & \ghost{\mathrm{U_a^{\dagger}}}_<<<{} &
        \ghost{\mathrm{U_r^{\dagger}}}_<<<{} & \qw & \qw & \qw & \qw & \qw & \qw & \qw & \qw & \qw & \qw &
        \qw & \qw & \qw & \qw\\ \nghost{{Z}_{2} : } & \lstick{{Z}_{2} : } & \qw & \qw & \qw &
        \ghost{\mathrm{U_r}}_<<<{} & \ghost{\mathrm{U_a}}_<<<{} & \ghost{\mathrm{U_s}}_<<<{} & \qw &
        \ghost{\mathrm{U_s^{\dagger}}}_<<<{} & \ghost{\mathrm{U_a^{\dagger}}}_<<<{} &
        \ghost{\mathrm{U_r^{\dagger}}}_<<<{} & \qw & \qw & \qw & \qw & \qw & \qw & \qw & \qw & \qw & \qw &
        \qw & \qw & \qw & \qw\\ \nghost{{Z}_{3} : } & \lstick{{Z}_{3} : } & \qw & \qw & \qw &
        \ghost{\mathrm{U_r}}_<<<{} & \ghost{\mathrm{U_a}}_<<<{} & \ghost{\mathrm{U_s}}_<<<{} & \qw &
        \ghost{\mathrm{U_s^{\dagger}}}_<<<{} & \ghost{\mathrm{U_a^{\dagger}}}_<<<{} &
        \ghost{\mathrm{U_r^{\dagger}}}_<<<{} & \qw & \qw & \qw & \qw & \qw & \qw & \qw & \qw & \qw & \qw &
        \qw & \qw & \qw & \qw\\ \nghost{{Z}_{4} : } & \lstick{{Z}_{4} : } & \qw & \qw & \qw &
        \ghost{\mathrm{U_r}}_<<<<{} & \ghost{\mathrm{U_a}}_<<<{} & \ghost{\mathrm{U_s}}_<<<{} & \qw &
        \ghost{\mathrm{U_s^{\dagger}}}_<<<{} & \ghost{\mathrm{U_a^{\dagger}}}_<<<{} &
        \ghost{\mathrm{U_r^{\dagger}}}_<<<<{} & \qw & \qw & \qw & \qw & \qw & \qw & \qw & \qw & \qw & \qw &
        \qw & \qw & \qw & \qw\\ \nghost{{Z}_{5} : } & \lstick{{Z}_{5} : } & \qw & \qw & \qw &
        \ghost{\mathrm{U_r}}_<<<<{} & \ghost{\mathrm{U_a}}_<<<{} & \ghost{\mathrm{U_s}}_<<<{} & \qw &
        \ghost{\mathrm{U_s^{\dagger}}}_<<<{} & \ghost{\mathrm{U_a^{\dagger}}}_<<<{} &
        \ghost{\mathrm{U_r^{\dagger}}}_<<<<{} & \qw & \qw & \qw & \qw & \qw & \qw & \qw & \qw & \qw & \qw &
        \qw & \qw & \qw & \qw\\ \nghost{{Z}_{6} : } & \lstick{{Z}_{6} : } & \qw & \qw & \qw &
        \ghost{\mathrm{U_r}}_<<<<{} & \ghost{\mathrm{U_a}}_<<<{} & \ghost{\mathrm{U_s}}_<<<{} & \qw &
        \ghost{\mathrm{U_s^{\dagger}}}_<<<{} & \ghost{\mathrm{U_a^{\dagger}}}_<<<{} &
        \ghost{\mathrm{U_r^{\dagger}}}_<<<<{} & \qw & \qw & \qw & \qw & \qw & \qw & \qw & \qw & \qw & \qw &
        \qw & \qw & \qw & \qw\\ \nghost{{Z}_{7} : } & \lstick{{Z}_{7} : } & \qw & \qw & \qw &
        \ghost{\mathrm{U_r}}_<<<<{} & \ghost{\mathrm{U_a}}_<<<{} & \ghost{\mathrm{U_s}}_<<<{} & \qw &
        \ghost{\mathrm{U_s^{\dagger}}}_<<<{} & \ghost{\mathrm{U_a^{\dagger}}}_<<<{} &
        \ghost{\mathrm{U_r^{\dagger}}}_<<<<{} & \qw & \qw & \qw & \qw & \qw & \qw & \qw & \qw & \qw & \qw &
        \qw & \qw & \qw & \qw\\ \nghost{{Z}_{8} : } & \lstick{{Z}_{8} : } & \qw & \qw & \qw &
        \ghost{\mathrm{U_r}}_<<<<{} & \ghost{\mathrm{U_a}}_<<<{} & \ghost{\mathrm{U_s}}_<<<{} & \qw &
        \ghost{\mathrm{U_s^{\dagger}}}_<<<{} & \ghost{\mathrm{U_a^{\dagger}}}_<<<{} &
        \ghost{\mathrm{U_r^{\dagger}}}_<<<<{} & \qw & \qw & \qw & \qw & \qw & \qw & \qw & \qw & \qw & \qw &
        \qw & \qw & \qw & \qw\\ \nghost{{Z}_{9} : } & \lstick{{Z}_{9} : } & \qw & \qw & \qw &
        \ghost{\mathrm{U_r}}_<<<<{} & \ghost{\mathrm{U_a}}_<<<{} & \ghost{\mathrm{U_s}}_<<<{} &
        \gate{\mathrm{P(\pi)}} & \ghost{\mathrm{U_s^{\dagger}}}_<<<{} & \ghost{\mathrm{U_a^{\dagger}}}_<<<{}
        & \ghost{\mathrm{U_r^{\dagger}}}_<<<<{} & \qw & \qw & \qw & \qw & \qw & \qw & \qw & \qw & \qw & \qw
        & \qw & \qw & \qw & \qw\\ \nghost{{A}_{0} : } & \lstick{{A}_{0} : } & \qw & \qw & \qw & \qw &
        \ghost{\mathrm{U_a}}_<<<<{} & \qw & \qw & \qw & \ghost{\mathrm{U_a^{\dagger}}}_<<<<{} & \qw & \qw &
        \qw & \qw & \qw & \qw & \qw & \qw & \qw & \qw & \qw & \qw & \qw & \qw & \qw\\ \nghost{{A}_{1} : } &
        \lstick{{A}_{1} : } & \qw & \qw & \qw & \qw & \ghost{\mathrm{U_a}}_<<<<{} & \qw & \qw & \qw &
        \ghost{\mathrm{U_a^{\dagger}}}_<<<<{} & \qw & \qw & \qw & \qw & \qw & \qw & \qw & \qw & \qw & \qw &
        \qw & \qw & \qw & \qw & \qw\\ \nghost{\mathrm{{CW} : }} & \lstick{\mathrm{{CW} : }} &
        \lstick{/_{_{3}}} \cw & \cw & \cw & \cw & \cw & \cw & \cw & \cw & \cw & \cw & \cw & \cw & \cw & \cw
        & \cw & \cw & \cw & \cw & \cw & \dstick{_{_{\hspace{0.0em}0}}} \cw \ar @{<=} [-18,0] &
        \dstick{_{_{\hspace{0.0em}1}}} \cw \ar @{<=} [-17,0] & \dstick{_{_{\hspace{0.0em}2}}} \cw \ar @{<=}
        [-16,0] & \cw & \cw\\ \\ }} \caption{Circuit schematic of one amplification iteration for the
        residual-based search procedure. The dashed barriers separate three logical regions. On the left,the
        parameter register \(W\) is prepared in a uniform superposition by Hadamard gates, while the spatial
        register \(X\) is prepared by \(U_{\mathrm{sup},X}\) in a superposition over all grid points. The search
        is performed exclusively over \(W\), whereas \(X\) servesas an auxiliary register to enable the coherent
        evaluation of the residual across the spatial domain. In the middle, the oracle is implemented in
        compute--phase--uncompute form. The compute block \(U_c = U_s U_a U_r\) acts on $\mathcal H_X \otimes
        \mathcal H_W \otimes \mathcal H_Z \otimes \mathcal H_A$. Starting from a value register initialized
        in \(\ket{0}_Z\), the circuit coherently evaluates the residuals \(r(x_i,w_j)\) for superpositions of
        \(\ket{i}_X \otimes \ket{j}_W\), computes the comparison value
        \(c(x_i,w_j)=|r(x_i,w_j)|-\varepsilon_{\mathrm{tol}}\), and writesthe discretized result
        \(z_{\kappa(i,j)}\) into the value register \(Z\), while the ancilla register \(A\) stores intermediate
        information required for reversible arithmetic. The phase test is implemented by applying the phase
        gate \(P(\pi)\) to the sign qubit of \(Z\), marking negative values of \(z_{\kappa(i,j)}\).The computation
        is subsequently uncomputed by applying \(U_c^\dagger\). On the right, the Grover diffusion operator
        \(U_{\Psi_W}\) acts only on the parameter register \(W\).} \label{fig:qcircuit_U_c}
    \end{figure*}
    \subsection{Oracle Structure and Amplification Geometry}
    Unlike the standard amplitude amplification framework introducedin Grover's algorithm
    \cite{Grover:1996} and generalized in \cite{Brassard:2000}, the residual-threshold oracle does not
    induce a binary phase marking on the parameter register alone. Instead, the phase response depends
    jointly on the spatial and parameter indices through the predicate \(\chi(i,j)\). On the clean
    computational subspace\(\mathcal K\), the effective oracle therefore acts on \(\mathcal
    H_X\otimes\mathcal H_W\) by assigning phases to pairs\((x_i,w_j)\), rather than to parameter values
    \(w_j\) alone. This joint structure is responsible for the spatially conditioned amplification
    dynamics derived below.
    \subsubsection{Conditioning on \(w_j\)}
    Fix \(j\in\mathcal I_W\). The corresponding conditioned operatoron the spatial register is
    \begin{equation}
        \label{def:S_j} S_\chi^{(j)} = \sum_{i=0}^{N_X-1} (-1)^{\chi(i,j)} \ket{i}\bra{i}_X
        .
    \end{equation}
    Thus \(S_\chi^{(j)}\) is diagonal in the spatial basis and each collocation point
    acquires a phase according to the residual-threshold test for the pair \((x_i,w_j)\).

    Let
    \[
        \ket{\Psi_X} = \sum_{i=0}^{N_X-1}\alpha_i\ket{i}_X, \qquad \sum_{i=0}^{N_X-1}|\alpha_i|^2=1,
    \]
    denote the spatial state prepared by
    \[
        U_{\mathrm{sup},X}\ket{0}_X=\ket{\Psi_X}.
    \]
    \begin{definition}[Effective oracle score]
        \label{def_oracle_score} The quantity \(q(w_j)\) is the
        spatially weighted fraction of collocation points at which the parameter value \(w_j\) satisfies the
        residual threshold. For \(j\in\mathcal I_W\), define
        \begin{align}
            q(w_j) = \sum_{i=0}^{N_X-1}
            |\alpha_i|^2\chi(i,j).
        \end{align}

    \end{definition}
    \begin{lemma}[Weighted oracle expectation]
        \label{lemma:weighted_oracle_expectation} For every
        \(j\in\mathcal I_W\),
        \[
            \bra{\Psi_X}S_\chi^{(j)}\ket{\Psi_X} = 1-2q(w_j).
        \]
    \end{lemma}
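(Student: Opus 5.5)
The plan is a direct computation of the diagonal expectation value, using only the definition \eqref{def:S_j} of \(S_\chi^{(j)}\), the definition of \(\ket{\Psi_X}\), and Definition~\ref{def_oracle_score}. First I will expand both the bra and the ket in the computational basis of \(\mathcal H_X\), writing \(\bra{\Psi_X}=\sum_{i'}\overline{\alpha_{i'}}\bra{i'}_X\). Since \(S_\chi^{(j)}\) is diagonal in this basis, applying it to \(\ket{\Psi_X}\) gives \(\sum_i \alpha_i(-1)^{\chi(i,j)}\ket{i}_X\). Taking the inner product and using orthonormality \(\braket{i'|i}=\delta_{i'i}\) collapses the double sum to
\[
\bra{\Psi_X}S_\chi^{(j)}\ket{\Psi_X}=\sum_{i=0}^{N_X-1}|\alpha_i|^2(-1)^{\chi(i,j)} .
\]

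The second step uses the fact that \(\chi(i,j)\in\{0,1\}\). For such values one has the identity \((-1)^{\chi}=1-2\chi\), which I will verify by checking the two cases \(\chi=0\) and \(\chi=1\). Substituting this identity and splitting the sum gives
\[
\sum_i|\alpha_i|^2-2\sum_i|\alpha_i|^2\chi(i,j).
\]
The first sum equals \(1\) by the normalization \(\sum_i|\alpha_i|^2=1\). The second sum is exactly \(q(w_j)\) by Definition~\ref{def_oracle_score}. Together these give \(1-2q(w_j)\).

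There is no real obstacle. The only points needing care are the complex conjugation of the amplitudes in the bra, so that the weights come out as \(|\alpha_i|^2\) rather than \(\alpha_i^2\), and the restriction of \(\chi\) to the values \(0\) and \(1\), which is what justifies the linearization \((-1)^\chi=1-2\chi\). As a consistency check, I will also note that this is the \(j\)-th diagonal block of \(S_\chi|_{\mathcal K}\), so that it matches the oracle action \eqref{eq:final_oracle_action}: projecting that action back onto \(\ket{0}_X\) gives \(\bra{\Psi_X}S_\chi^{(j)}\ket{\Psi_X}\).
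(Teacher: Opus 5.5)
Your proof is correct and follows the same route as the paper: expand in the computational basis to obtain \(\sum_i |\alpha_i|^2(-1)^{\chi(i,j)}\), then apply \((-1)^{\chi}=1-2\chi\) together with the definition of \(q(w_j)\). You also state explicitly the normalization \(\sum_i|\alpha_i|^2=1\) and the conjugation in the bra, which the paper's one-line argument uses without mentioning.
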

    \begin{proof}
        Using \eqref{def:S_j},
        \[
            \bra{\Psi_X}S_\chi^{(j)}\ket{\Psi_X} = \sum_{i=0}^{N_X-1}
            |\alpha_i|^2(-1)^{\chi(i,j)}.
        \]
        Since \((-1)^{\chi(i,j)}=1-2\chi(i,j)\), the claim follows directly
        from the definition of \(q(w_j)\).
    \end{proof}
    Equivalently, if
    \[
        \Pi_{\mathrm{good}}^{(j)} = \sum_{\chi(i,j)=1} \ket{i}\bra{i}_X, \qquad
        S_\chi^{(j)} = I_X-2\Pi_{\mathrm{good}}^{(j)},
    \]
    then
    \[
        q(w_j) = \bra{\Psi_X}
        \Pi_{\mathrm{good}}^{(j)} \ket{\Psi_X}.
    \]
    Hence \(q(w_j)\) determines the effective oracle response
    associated with \(w_j\). In general, this response is not binary on the parameter space and
    therefore does not correspond to a single global Grover rotation angle.
    \subsubsection{Conditioning on \(x_i\): coherent amplitude amplification on \(W\)}
    We now condition on a fixed spatial index \(i\). Since the effective oracle is diagonal in the
    spatial register, each basis state \(\ket{i}_X\) induces a phase oracle on the parameter register,
    \[
        S_\chi^{(i)} = \sum_{j=0}^{N_W-1} (-1)^{\chi(i,j)} \ket{j}\bra{j}_W .
    \]
    Equivalently,
    \[
        S_\chi^{(i)} = I_W-2\Pi_{\mathrm{good}}^{(i)}, \qquad \Pi_{\mathrm{good}}^{(i)} = \sum_{j\in\mathcal
        M_i} \ket{j}\bra{j}_W,
    \]
    where
    \[
        \mathcal M_i = \{\,j\in\mathcal I_W:\chi(i,j)=1\,\}.
    \]
    Thus, for
    fixed \(x_i\), the oracle marks precisely those parameter values whose residual satisfies the
    threshold at that collocation point.

    Let
    \begin{equation}
        \label{def_grover_diffusion} U_{\Psi_W} = 2\ket{\Psi}\bra{\Psi}_W-I_W
    \end{equation}
    be the diffusion operator about the uniform parameter state
    \[
        \ket{\Psi}_W =
        \frac{1}{\sqrt{N_W}} \sum_{j=0}^{N_W-1}\ket{j}_W .
    \]
    The amplification iterate is
    \[
        Q = \bigl(
        I_X\otimes U_{\Psi_W}\otimes I_Z\otimes I_A \bigr) S_\chi .
    \]
    For fixed \(i\), define
    \[
        Q_i =
        U_{\Psi_W}S_\chi^{(i)} .
    \]
    On the clean computational subspace, the full iterate decomposesas
    \begin{equation}
        \label{grover_decomposition} Q\big|_{\mathcal K} = \sum_{i=0}^{N_X-1}
        \ket{i}\bra{i}_X \otimes Q_i \otimes \ket{0}\bra{0}_Z \otimes \ket{0}\bra{0}_A .
    \end{equation}
    Thus
    the global evolution is a coherently controlled family of amplitude amplification processes on
    \(\mathcal H_W\), indexed by the spatial register.

    For each \(i\), define the normalized good and bad states
    \[
        \ket{g_i} = \frac{1}{\sqrt{|\mathcal
        M_i|}} \sum_{j\in\mathcal M_i}\ket{j}_W, \qquad \ket{b_i} = \frac{1}{\sqrt{N_W-|\mathcal M_i|}}
        \sum_{j\notin\mathcal M_i}\ket{j}_W,
    \]
    and set
    \[
        \mathcal G_i =
        \operatorname{span}\{\ket{g_i},\ket{b_i}\}.
    \]
    The uniform parameter state satisfies
    \[
        \ket{\Psi}_W
        = \sin\theta_i\,\ket{g_i} + \cos\theta_i\,\ket{b_i}, \qquad \sin^2\theta_i = \frac{|\mathcal
        M_i|}{N_W}.
    \]
    Both \(S_\chi^{(i)}\) and \(U_{\Psi_W}\) leave \(\mathcal G_i\) invariant.
    Consequently, \(Q_i\) acts on \(\mathcal G_i\) as the standard two-dimensional amplitude
    amplification rotation \cite{Grover:1996,Brassard:2000}. With respect to the ordered basis
    \((\ket{g_i},\ket{b_i})\),
    \[
        Q_i = \begin{pmatrix} \cos(2\theta_i)  & \sin(2\theta_i) \\
        -\sin(2\theta_i) & \cos(2\theta_i)\end{pmatrix}.
    \]
    Therefore,
    \[
        Q_i^k\ket{\Psi}_W =
        \sin((2k+1)\theta_i)\ket{g_i} + \cos((2k+1)\theta_i)\ket{b_i}.
    \]
    For a spatial superposition
    \[
        \ket{\Psi_X} = \sum_{i=0}^{N_X-1}\alpha_i\ket{i}_X,
    \]
    the block
    decomposition \eqref{grover_decomposition} gives
    \[
        Q^k \bigl( \ket{\Psi_X} \otimes \ket{\Psi}_W
        \otimes \ket{0}_Z \otimes \ket{0}_A \bigr) = \sum_{i=0}^{N_X-1} \alpha_i \ket{i}_X \otimes
        Q_i^k\ket{\Psi}_W \otimes \ket{0}_Z \otimes \ket{0}_A .
    \]
    Thus the amplification process evolves as
    a coherent superposition of spatially conditioned amplitude amplification rotations, generally with
    different angles \(\theta_i\).
    \begin{theorem}[Coherent spatially conditioned amplitude amplification]
        \label{theorem:coherent_amplitude_amplification} With the notation above,
        \[
            Q^k \bigl( \ket{\Psi_X}
            \otimes \ket{\Psi}_W \otimes \ket{0}_Z \otimes \ket{0}_A \bigr) = \sum_{i=0}^{N_X-1} \alpha_i
            \ket{i}_X \otimes Q_i^k\ket{\Psi}_W \otimes \ket{0}_Z \otimes \ket{0}_A .
        \]
        The total success
        probability after \(k\) iterations is
        \[
            P(k) = \sum_{i=0}^{N_X-1} |\alpha_i|^2
            \sin^2((2k+1)\theta_i),
        \]
        where
        \[
            \theta_i = \arcsin \left( \sqrt{\frac{|\mathcal M_i|}{N_W}}
            \right), \qquad \mathcal M_i = \{\,j\in\mathcal I_W:\chi(i,j)=1\,\}.
        \]
        Consequently, the optimal
        iteration count is characterized by
        \[
            k^\ast = \arg\max_{k\in\mathbb N_0} \sum_{i=0}^{N_X-1}
            |\alpha_i|^2 \sin^2((2k+1)\theta_i).
        \]
    \end{theorem}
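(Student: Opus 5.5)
The plan is to derive the first identity from the block decomposition \eqref{grover_decomposition}, then read off the success probability from the two-dimensional rotation structure of each $Q_i$.

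\textbf{Step 1: the block decomposition.} I first verify \eqref{grover_decomposition} on the clean subspace $\mathcal K$.

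By \eqref{def:selective_phase_operator}, $S_\chi$ maps $\ket{i}_X\ket{j}_W\ket{0}_Z\ket{0}_A$ to $(-1)^{\chi(i,j)}$ times the same vector. Hence $S_\chi$ preserves $\mathcal K$ and acts there as $\sum_i \ket{i}\bra{i}_X\otimes S_\chi^{(i)}\otimes\ket{0}\bra{0}_Z\otimes\ket{0}\bra{0}_A$.

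The diffusion factor $I_X\otimes U_{\Psi_W}\otimes I_Z\otimes I_A$ acts trivially on $X$, $Z$ and $A$. It therefore also preserves $\mathcal K$ and commutes with the projectors $\ket{i}\bra{i}_X$.

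Multiplying the two factors gives $Q|_{\mathcal K}=\sum_i \ket{i}\bra{i}_X\otimes Q_i\otimes\ket{0}\bra{0}_Z\otimes\ket{0}\bra{0}_A$.

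Since $\mathcal K$ is $Q$-invariant, induction on $k$ yields $Q^k|_{\mathcal K}=\sum_i \ket{i}\bra{i}_X\otimes Q_i^k\otimes\ket{0}\bra{0}_Z\otimes\ket{0}\bra{0}_A$. The cross terms vanish because the projectors $\ket{i}\bra{i}_X$ are mutually orthogonal. Applying this to $\ket{\Psi_X}\otimes\ket{\Psi}_W\otimes\ket{0}_Z\otimes\ket{0}_A\in\mathcal K$ gives the first identity.

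\textbf{Step 2: the two-dimensional rotation for each $Q_i$.} Write $\ket{\Psi}_W=\sin\theta_i\ket{g_i}+\cos\theta_i\ket{b_i}$. On $\mathcal G_i$:
\begin{itemize}
\item $S_\chi^{(i)}$ is the reflection that flips the sign of $\ket{g_i}$;
\item $U_{\Psi_W}$ is the reflection about $\ket{\Psi}_W$.
\end{itemize}
Their product is a rotation by $2\theta_i$. This gives $Q_i^k\ket{\Psi}_W=\sin((2k+1)\theta_i)\ket{g_i}+\cos((2k+1)\theta_i)\ket{b_i}$, up to a global sign convention that does not affect the probabilities below.

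\textbf{Step 3: the success probability.} I define success as measuring, for the spatial branch $i$, a parameter index in $\mathcal M_i$. That is, $P(k)=\lVert(\sum_i\ket{i}\bra{i}_X\otimes\Pi^{(i)}_{\mathrm{good}}\otimes I)Q^k\ket{\text{init}}\rVert^2$.

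Orthogonality of the $X$-branches turns this into $\sum_i|\alpha_i|^2\lVert\Pi^{(i)}_{\mathrm{good}}Q_i^k\ket{\Psi}_W\rVert^2$. Each term equals $|\alpha_i|^2\sin^2((2k+1)\theta_i)$.

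The characterization of $k^\ast$ is then just the definition of the maximizer of this function over $k\in\mathbb N_0$. The maximum exists for every configuration of the $\theta_i$: if they are all rationally related to $\pi$, the sequence $P(k)$ is periodic, hence takes finitely many values. Otherwise I interpret $k^\ast$ as a supremum, or restrict $k$ to a bounded range.

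\textbf{Main obstacle: degenerate branches.} The delicate point is the edge cases $|\mathcal M_i|=0$ and $|\mathcal M_i|=N_W$, where $\ket{g_i}$ or $\ket{b_i}$ is undefined.

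For these I will argue directly:
\begin{itemize}
\item if $\theta_i=0$, then $S_\chi^{(i)}=I$ and $Q_i^k\ket{\Psi}_W=\ket{\Psi}_W$, so the branch success probability is $0=\sin^2((2k+1)\cdot 0)$;
\item if $\theta_i=\pi/2$, then $S_\chi^{(i)}=-I$ and $Q_i\ket{\Psi}_W=-\ket{\Psi}_W$, so every outcome is good and the branch success probability is $1=\sin^2((2k+1)\pi/2)$.
\end{itemize}
In both cases the formula still holds. Everything else in the argument is bookkeeping of tensor factors.
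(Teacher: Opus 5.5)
Your proposal is correct and follows the same route as the paper. It uses the block decomposition of $Q$ on $\mathcal K$, the standard two-dimensional rotation of each $Q_i$ on $\mathcal G_i$, and orthogonality of the spatial branches to sum the sector probabilities $|\alpha_i|^2\sin^2((2k+1)\theta_i)$.

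Your handling of the degenerate sectors $|\mathcal M_i|\in\{0,N_W\}$ and of the existence of $k^\ast$ is more careful than the paper's. However, your opening claim that the maximum always exists is false. For example, a single sector with $\sin^2\theta=1/64$ has $2\theta/\pi$ irrational, so the supremum $1$ is approached but never attained. Your fallback to a supremum or a bounded range of $k$ is therefore necessary, not optional.
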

    \begin{proof}
        Equation~\eqref{grover_decomposition} implies that, on the cleancomputational
        subspace,
        \[
            Q^k = \sum_{i=0}^{N_X-1} \ket{i}\bra{i}_X \otimes Q_i^k \otimes \ket{0}\bra{0}_Z
            \otimes \ket{0}\bra{0}_A .
        \]
        Applying this identity to \(
        \ket{\Psi_X}\otimes\ket{\Psi_W}\otimes\ket{0}_Z\otimes\ket{0}_A \) gives the stated coherent
        decomposition.

        For each spatial state \(i\), the operator \(Q_i\) is the standard amplitude amplification iterate
        on \(\mathcal G_i\). Hence
        \[
            Q_i^k\ket{\Psi}_W = \sin((2k+1)\theta_i)\ket{g_i} +
            \cos((2k+1)\theta_i)\ket{b_i}.
        \]
        The probability of measuring a marked parameter in sector \(i\) is
        therefore \(\sin^2((2k+1)\theta_i)\). Since the spatial basis states are orthogonal, tracing over
        the spatial register gives
        \[
            P(k) = \sum_{i=0}^{N_X-1} |\alpha_i|^2 \sin^2((2k+1)\theta_i).
        \]
        Maximization over \(k\in\mathbb N_0\) yields the stated expression for \(k^\ast\).
    \end{proof}
    \begin{remark}[Spatially coherent amplification geometry]
        In practical boundary value problems, the
        marked parameter set
        \[
            \mathcal M_i = \{\,j\in\mathcal I_W:\chi(i,j)=1\,\}
        \]
        usually depends on
        the spatial index \(i\). Consequently, the algorithm is not governed by a single global
        amplification angle. Instead,it realizes a coherent superposition of spatially conditioned amplitude
        amplification rotations, with weights determined by the spatial amplitudes \(\alpha_i\).
    \end{remark}
    \section{Oracle Cost
    and Runtime Comparison}
    \label{sec:oracle_cost_runtime} We now separate the cost of constructing the
    residual-threshold oracle from the subsequent amplitude amplification dynamics. The main point is
    that the boundary value problem residual is evaluated coherentlyon the spatial register, so that the
    oracle cost depends polynomially on \(n_X=\log_2(N_X)\), rather than linearly on the number \(N_X\)
    of collocation points.
    \begin{theorem}[Cost of the residual-threshold oracle]
        \label{thm:oracle_cost} Assume that the
        ansatz admits the polynomial representation
        \[
            u(x,w) = \sum_{j=0}^{m-1} \sum_{d=0}^{D_j}
            a_{j,d}\,w_j x^d,
        \]
        and define
        \[
            D_u := \max_{0\le j\le m-1} D_j .
        \]
        Furthermore, assume that the
        forcing term is polynomial in \(u\),
        \[
            f(x,u) = \sum_{q=0}^{Q} b_q(x)u^q, \qquad
            \deg_x\!\bigl(b_q(x)\bigr)=B_q .
        \]
        Define the maximal induced spatial degree
        \[
            D_f := \max_{0\le
            q\le Q} \bigl( B_q+qD_u \bigr).
        \]
        Using the reversible QFT-based arithmetic constructions of Appendix~\ref{subsec:qft_polynomial}, the
        residual-threshold oracle from Definition~\ref{def_final_oracle} can be implemented with elementary
        gate complexity
        \[
            C_{\mathrm{oracle}} = O\!\left( n_Z n_X^{D_f} n_W^{Q}(D_f+Q+1)^2 + n_Z^2 \right),
        \]
        up to lower-order polynomial overhead from finite-difference shifts, comparison, phase marking,
        and uncomputation.
    \end{theorem}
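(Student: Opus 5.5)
The plan is to prove the bound by decomposing the oracle exactly as in Definition~\ref{def_final_oracle} and the factorization $U_c = U_s U_a U_r$, and bounding each factor separately. The oracle is
\[
U_{\mathrm{oracle}} = U_{\mathrm{sup},X}^\dagger\, U_c^\dagger U_{\mathrm{phase}} U_c\, U_{\mathrm{sup},X}.
\]
Its total cost is therefore $2C(U_{\mathrm{sup},X}) + 2C(U_r) + 2C(U_a) + 2C(U_s) + C(S_Z)$, since the inverse of a circuit has the same gate count. The easy terms I would dispose of first:
\begin{itemize}
\item In the uniform case $U_{\mathrm{sup},X}$ is $n_X$ Hadamards.
\item $S_Z$ is a single phase gate on the sign qubit.
\item $U_s$ subtracts a classical constant $\varepsilon_{\mathrm{tol}}$. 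With Draper-type QFT addition this costs $O(n_Z^2)$ (QFT, then $n_Z$ single-qubit rotations, then inverse QFT).
\item $U_a$ is a sign-controlled two's-complement negation. It needs a controlled bitwise NOT plus a controlled increment, again $O(n_Z^2)$ via the QFT adder, with the sign copied to the ancilla.
\end{itemize}
These account for the additive $n_Z^2$ term.

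The main work is $U_r$, which computes $r(x_i,w_j) = u''(x_i,w_j) + f(x_i,u(x_i,w_j))$ into $Z$. Substituting the ansatz into $f$ and expanding gives
\[
r(x,w) = \sum_{\ell,d} a_{\ell,d}\, d(d-1)\, w_\ell\, x^{d-2} + \sum_{q=0}^{Q} b_q(x)\Bigl(\sum_{\ell,d} a_{\ell,d}\, w_\ell x^d\Bigr)^{q}.
\]
This is a polynomial in the bits of $x$ and $w$. Its $x$-degree is at most $D_f$, since $B_q + qD_u \le D_f$ and the $u''$ term has degree $D_u-2$. Its $w$-degree is at most $Q$ (assuming $Q\ge 1$).

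Write $x = \sum_{s} 2^{-s} x_s$ in fixed point with $n_X$ bits, and each $w_\ell$ with its bits inside the $n_W$ qubits. A monomial $x^{d} w^{\beta}$ with $|\beta|\le Q$ then expands into at most $n_X^{d} n_W^{|\beta|}$ products of bits. Each such product is a Boolean AND of at most $d+|\beta| \le D_f+Q$ qubits, multiplied by a classical coefficient.

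Following the QFT polynomial construction of Appendix~\ref{subsec:qft_polynomial}, I would apply a single QFT to $Z$. Then, for each bit-product term, I apply controlled phase rotations on the $n_Z$ Fourier qubits, multiply-controlled by the $\le D_f+Q$ participating bits, and finally apply the inverse QFT. Decomposing a $(D_f+Q+1)$-fold controlled rotation into elementary gates costs $O((D_f+Q+1)^2)$ without ancillas (or linear with ancillas, which is within the bound). Summing over $n_Z$ target qubits and $O(n_X^{D_f} n_W^{Q})$ terms gives
\[
O\bigl(n_Z\, n_X^{D_f} n_W^{Q}(D_f+Q+1)^2\bigr).
\]
The two QFTs add $O(n_Z^2)$. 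The number of distinct monomials in $(x,w)$ and the multinomial coefficients depend only on $m$, $D_u$ and $Q$, and are absorbed into the constant.

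I expect the main obstacle to be justifying that the bit-expansion term count is $O(n_X^{D_f} n_W^Q)$ uniformly, rather than something like $n_X^{D_f}$ times a combinatorial factor in $m$ that must be treated as constant. I also have to make sure that merging lower-degree terms and the $u''$ contribution does not exceed the stated power. I would handle this by noting $x_s^2 = x_s$ for bits, so the number of distinct products is bounded by $\sum_{d\le D_f}\binom{n_X}{\le d}\cdot\binom{n_W}{\le Q} = O(n_X^{D_f}n_W^Q)$.

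Two further points need care:
\begin{itemize}
\item The fixed-point assumption stated before Definition~\ref{def_final_oracle} (no wrap-around) must be invoked so that modular phase accumulation equals the true residual.
\item If $u''$ is realized by finite-difference shifts $x\pm h$ instead of analytically, those modular adders on $X$ cost $O(n_X^2)$ each. They are then collected into the ``lower-order overhead'' allowed in the statement.
\end{itemize}
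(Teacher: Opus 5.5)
Your proposal is correct and follows essentially the paper's route: you expand the residual into bit products of spatial degree at most $D_f$ and parameter degree at most $Q$, realize them as multi-controlled phase rotations on the Fourier-transformed value register using the $O((c+1)^2)$ decomposition of Barenco et al., and charge $O(n_Z^2)$ for the QFTs, the tolerance subtraction and the absolute value; your single global QFT pair and your count of distinct bit products via $x_s^2=x_s$ are only slightly tighter bookkeeping than the paper's monomial-by-monomial use of Lemma~\ref{lemma:qft_polynomial_runtime}. Note that the oracle actually evaluates the finite-difference term $D_h^2u$ (spatial degree $D_u$, parameter degree $1$) rather than the analytic $u''$, and absorbing that term into the stated bound requires $D_u\le D_f$ and $Q\ge1$; your explicit assumption $Q\ge1$ supplies both, since then $D_f\ge B_Q+QD_u\ge D_u$, whereas the paper's proof uses this only tacitly.
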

    \begin{proof}
        The residual evaluation consists of the finite-difference contribution \(D_h^2u(x,w)\)
        and the forcing contribution \(f(x,u(x,w))\).

        Since \(u(x,w)\) is linear in the parameter variables and polynomial in \(x\) of degree at most
        \(D_u\), every monomial appearing in \(D_h^2u(x,w)\) has parameter degree at most \(1\) and spatial
        degree at most \(D_u\). By Lemma~\ref{lemma:qft_polynomial_runtime}, each such monomial can be
        implemented with gate complexity
        \[
            O\!\left( n_Z n_X^{D_u} n_W (D_u+2)^2 + n_Z^2 \right).
        \]
        We now consider the forcing contribution
        \begin{align*}
            f(x,u(x,w)) = \sum_{q=0}^{Q} b_q(x)u(x,w)^q
            .
        \end{align*}
        For fixed \(q\), every monomial generated by \(b_q(x)u(x,w)^q\) has parameter degree
        at most \(q\) and spatial degree at most \( B_q+qD_u . \) Therefore
        Lemma~\ref{lemma:qft_polynomial_runtime} gives, for the \(q\)-th summand, the gate-complexity bound
        \[
            O\!\left( n_Z n_X^{B_q+qD_u} n_W^{q} \bigl(B_q+qD_u+q+1\bigr)^2 + n_Z^2 \right).
        \]
        By the definitions of \(D_f\) and \(Q\), we have
        \[
            B_q+qD_u\le D_f, \qquad q\le Q,
        \]
        for all
        \(q\in\{0,\ldots,Q\}\). Hence each forcing summand is bounded by
        \[
            O\!\left( n_Z n_X^{D_f} n_W^{Q}
            (D_f+Q+1)^2 + n_Z^2 \right).
        \]
        Summing over \(q=0,\ldots,Q\) changes the estimate only by the
        polynomial factor \(Q+1\), which is absorbed into the stated asymptotic bound.

        The remaining oracle components, namely finite-difference shifts, absolute-value evaluation,
        threshold comparison, phase marking, and the corresponding uncomputation, contribute only
        lower-order polynomial overhead under the same arithmetic model. Therefore
        \[
            C_{\mathrm{oracle}} =
            O\!\left( n_Z n_X^{D_f} n_W^{Q} (D_f+Q+1)^2 + n_Z^2 \right).
        \]
    \end{proof}
    \begin{remark}
        The complexity estimate of Theorem~\ref{thm:oracle_cost} should be interpreted as a
        worst-case upper bound. It is derived from the explicit QFT-based polynomial arithmetic construction
        of Section~\ref{subsec:qft_polynomial}, where polynomial terms are expanded into bit-level monomial
        contributions and implemented through multi-controlled phase operations. Consequently, the estimate
        reflects the maximal combinatorial growth induced by the polynomial degrees \(D_f\) and \(Q\). For more
        specialized reversible arithmetic implementations, lower gate complexities may be achievable.
    \end{remark}
    \begin{remark}[Comparison with classical exhaustive collocation search]
        A classical exhaustive collocation search evaluates the residualindependently for all pairs
        \[
            (x_i,w_j)\in X\times W_m .
        \]
        Let \( C_r^{\mathrm{cl}}(n_Z) \) denote the number of classical
        arithmetic operations required toevaluate the fixed-point residual \( r(x_i,w_j) \) at precision
        \(n_Z\). The total classical computational cost is therefore
        \[
            C_{\mathrm{classical}} = O\!\left(
            N_X N_W C_r^{\mathrm{cl}}(n_Z) \right).
        \]
        In contrast, the quantum implementation encodes the spatial and parameter discretizations using
        \[
            n_X=\log_2(N_X), \qquad n_W=\log_2(N_W),
        \]
        qubits and evaluates the residual coherently across the superposition of collocation points and
        parameter states. By Theorem~\ref{thm:oracle_cost}, the resulting oracle complexity scales as
        \[
            C_{\mathrm{oracle}} = O\!\left( n_Z n_X^{D_f} n_W^Q \right).
        \]
        Hence, under the polynomial arithmetic model of Appendix~\ref{subsec:qft_polynomial}, the oracle
        complexity is polynomial in the logarithmic register sizes \( n_X \) and \( n_W \), rather than
        linear in the explicit discretization sizes \( N_X \) and \( N_W \).
    \end{remark}
    \begin{corollary}[Reduction to the standard Grover regime]
        \label{cor:standard_grover_limit}
        Assume that the residual-threshold predicate is independent of the spatial index. That is, suppose
        there exists a set
        \[
            \mathcal M \subseteq \mathcal I_W
        \]
        such that
        \[
            \chi(i,j) = \begin{cases}1,
            & j\in\mathcal M, \\ 0, & j\notin\mathcal M,\end{cases} \qquad \text{for all } i\in\mathcal I_X .
        \]
        Let
        \[
            M:=|\mathcal M|, \qquad 0<M<N_W .
        \]
        Then the marked parameter set is identical for all collocation points, and all amplification angles
        coincide:
        \begin{align}
            \label{def_superposition_theta} \theta_i = \theta = \arcsin\!\left(
            \sqrt{\frac{M}{N_W}} \right).
        \end{align}
        Consequently, the coherent spatially conditioned
        amplification dynamics reduce to the standard amplitude amplification regime on \(\mathcal H_W\).
        The optimal iteration count is given by
        \begin{align}
            \label{def_optimal_k} k^\ast = \left\lfloor
            \frac{\pi}{4\theta} -\frac12 \right\rfloor,
        \end{align}
        where \(\lfloor\cdot\rfloor\) denotes
        rounding to the nearest integer.

        The corresponding total gate complexity is
        \[
            C_{\mathrm{quantum}} = O\!\left( \frac{1}{
            \arcsin\!\left( \sqrt{M/N_W} \right) } \left( n_Z n_X^{D_f} n_W^Q (D_f+Q+1)^2 + n_Z^2 \right)
            \right).
        \]
        In the sparse-marked regime \( \frac{M}{N_W}\ll1, \) one has \( \arcsin\!\left( \sqrt{\frac{M}{N_W}}
        \right) \sim \sqrt{\frac{M}{N_W}}, \) and therefore
        \[
            C_{\mathrm{quantum}} = O\!\left(
            \sqrt{\frac{N_W}{M}} \left( n_Z n_X^{D_f} n_W^Q (D_f+Q+1)^2 + n_Z^2 \right) \right).
        \]
    \end{corollary}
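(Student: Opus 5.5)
The plan is to specialize Theorem~\ref{theorem:coherent_amplitude_amplification} to a spatially independent predicate, read off the standard Grover optimum, and multiply the number of iterations by the per-iteration oracle cost from Theorem~\ref{thm:oracle_cost}.

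\textbf{Step 1: all angles coincide.} Under the hypothesis, \(\chi(i,j)=\mathbf 1[j\in\mathcal M]\) for every \(i\). Hence \(\mathcal M_i=\mathcal M\) for all \(i\in\mathcal I_X\). The definition of \(\theta_i\) then gives \(\theta_i=\arcsin(\sqrt{M/N_W})=\theta\) for all \(i\). The condition \(0<M<N_W\) guarantees that \(\ket{g_i}\) and \(\ket{b_i}\) are well defined and that \(\theta\in(0,\pi/2)\).

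\textbf{Step 2: reduction to standard amplification.} The blocks \(Q_i=U_{\Psi_W}S_\chi^{(i)}\) are then all equal to a single \(Q_W=U_{\Psi_W}(I_W-2\Pi_{\mathcal M})\). By \eqref{grover_decomposition}, the full iterate restricted to \(\mathcal K\) factorizes as \(I_X\otimes Q_W\) on the clean registers. The success probability of Theorem~\ref{theorem:coherent_amplitude_amplification} therefore collapses, using \(\sum_i|\alpha_i|^2=1\), to
\[
P(k)=\sin^2((2k+1)\theta).
\]
This is exactly the standard amplitude amplification law of \cite{Brassard:2000}.

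\textbf{Step 3: optimal iteration count.} Over the reals, \(P\) is maximized when \((2k+1)\theta=\pi/2\), that is, at \(k=\pi/(4\theta)-1/2\). Since \(P\) is unimodal in \(k\) on \([0,\pi/(2\theta)-1/2]\), the integer maximizer is obtained by rounding this value to the nearest integer, which is the expression \eqref{def_optimal_k} under the stated convention for \(\lfloor\cdot\rfloor\).

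\textbf{Step 4: gate complexity.} One iteration consists of a single application of \(S_\chi\), plus \(U_{\mathrm{sup},X}\) and its inverse, plus the diffusion operator \(U_{\Psi_W}\). The first three are bounded by Theorem~\ref{thm:oracle_cost}. The diffusion costs \(O(n_W)\) gates via a multi-controlled \(Z\), and uniform state preparation costs \(O(n_X)\) Hadamards. Both terms are dominated by the oracle bound. Multiplying by \(k^\ast+1=O(1/\theta)\) iterations gives the first complexity expression. For the sparse regime, \(\arcsin(s)=s+O(s^3)\) as \(s\to0\) yields \(1/\theta\sim\sqrt{N_W/M}\), which gives the final bound.

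\textbf{Main obstacle.} The only subtle point is the rounding argument in Step 3. One must check that the nearest integer to \(\pi/(4\theta)-1/2\) is indeed the argmax over \(\mathbb N_0\), not merely a near-optimal choice. My plan is to use unimodality together with the symmetry of \(\sin^2\) about \(\pi/2\). If exact optimality fails at a rounding tie, I would weaken the claim to \(P(k^\ast)\ge 1-M/N_W\), which is the standard guarantee.
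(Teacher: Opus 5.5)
Your proposal is correct and follows the paper's proof essentially step for step (set $\mathcal M_i=\mathcal M$, collapse $P(k)$ to $\sin^2((2k+1)\theta)$ using $\sum_i|\alpha_i|^2=1$, locate the optimum at $(2k+1)\theta\approx\pi/2$, and combine with Theorem~\ref{thm:oracle_cost}), and you are more explicit than the paper about multiplying the per-iteration cost by $k^\ast+1=O(1/\theta)$ iterations. One correction to your ``main obstacle'': $P(k)$ is periodic in $k$ with period $\pi/(2\theta)$, so rounding $\pi/(4\theta)-\tfrac12$ gives the maximizer only on the first lobe $0\le k\le \pi/(2\theta)-\tfrac12$; ties are harmless, but later lobes can be strictly better (e.g.\ $M/N_W=3/8$ gives $P(1)\approx 0.84<P(3)\approx 0.99$), so unimodality cannot give an argmax over all of $\mathbb N_0$, and the correct reading, implicit in the paper's ``$\approx$'', is first-lobe optimality together with your fallback bound $P(k^\ast)\ge 1-M/N_W$.
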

    \begin{proof}
        Under the stated assumption, \( \mathcal M_i = \mathcal M \text{ for all } i\in\mathcal I_X , \) and
        therefore all amplification angles coincide:
        \[
            \theta_i = \arcsin\!\left( \sqrt{\frac{M}{N_W}}
            \right) =: \theta .
        \]
        Since \( \sum_{i=0}^{N_X-1} |\alpha_i|^2=1. \) the success probability from
        Theorem~\ref{theorem:coherent_amplitude_amplification} reduces to
        \[
            P(k) = \sum_{i=0}^{N_X-1}
            |\alpha_i|^2 \sin^2((2k+1)\theta) = \sin^2((2k+1)\theta),
        \]
        Thus the spatially conditioned dynamics reduce to standard amplitude amplification on \(\mathcal
        H_W\). The success probability is maximized when
        \[
            (2k+1)\theta \approx \frac{\pi}{2},
        \]
        which
        yields
        \[
            k^\ast = \left\lfloor \frac{\pi}{4\theta} -\frac12 \right\rfloor .
        \]
        Substituting \( \theta = \arcsin\!\left( \sqrt{\frac{M}{N_W}} \right) \) into
        Theorem~\ref{thm:oracle_cost} gives the stated complexity estimate.
    \end{proof}
    \begin{remark}[Uniform spatial weighting]
        If the spatial preparation is uniform, i.e.
        \[
            U_{\mathrm{sup},X}\ket{0}_X = \frac{1}{\sqrt{N_X}} \sum_{i=0}^{N_X-1}\ket{i}_X,
        \]
        then the coherent
        success probability simplifies to
        \[
            P(k) = \frac{1}{N_X} \sum_{i=0}^{N_X-1} \sin^2((2k+1)\theta_i).
        \]
        This is not, in general, the standard Grover setting, since the amplification angles \(\theta_i\)
        may still depend on the spatial index. The standard Grover limit is recovered only when the marked
        parameter set is independent of \(i\).
    \end{remark}
    \section{Experimental Setup}
    \label{sec:experimental_setup} The aim of the experiments is to study
    the interplay between discretization error, ansatz error, and finite-precision effectsin a
    controlled setting. In particular, we analyze how the performance of the method depends on the
    Grover iteration count \(k\), the spatial resolution \(n_X\), and the oracle tolerance \(\varepsilon\)
    (used as shorthandfor \(\varepsilon_{\mathrm{tol}}\) throughout the following analysis), in relation
    to the register precisions \((n_Z, n_W)\).
    \subsection{Problem Class}
    We consider one-dimensional boundary value problems
    \begin{equation}
        u''(x)+f(x,u(x))=0,\qquad
        x\in[0,1], \qquad u(0)=u(1)=0.
    \end{equation}
    The numerical experiments use three representative test cases that are simple enough to be tested on a simulator:
    \begin{align}
        & u''(x)+1=0, \label{eq:test_constant} \\ & u''(x)+x=0,
        \label{eq:test_linear}                  \\ & u''(x)+u(x)+\alpha u(x)^2 +2\lambda-\lambda x(1-x) -\alpha\lambda^2
        x^2(1-x)^2=0, \qquad \alpha\in\{8,16,32\}. \label{eq:test_nonlinear}
    \end{align}
    The first two
    problems are linear test cases with constant and spatially varying forcing, respectively, while the
    third is a manufacturednonlinear problem in which \(\alpha\) controls the strength of the
    nonlinearity.
    \subsection{Ansatz Functions}
    We use parametrized ansatz functions of increasing expressivity:
    \begin{align}
        u_{1}(x) & = w_2\,
        x(1-x), \label{eq:u1} \\ u_{2}(x) & = w_2\, x(1-x) + w_1\, x^2 (1-x)^2, \label{eq:u2} \\ u_{3}(x) &
        = w_2\, x(1-x^2). \label{eq:u3}
    \end{align}
    The second ansatz defines a two-dimensional parameter
    space, while the first and third corresponds to a one-dimensional parameter space.
    \subsection{Parameter Study}
    The parameter study is designed to test the theoretical predictions
    under controlled changes of the discretization, oracle tolerance, ansatz class, finite precision,
    and oracle implementation. We organize the experiments as follows.
    \begin{enumerate}
        \item
        \textbf{Linear baseline problem.} We first consider a linear problem whose exact solution is
        contained in the ansatz class. Both one- and two-dimensional parameterizations are used to verify
        the amplification dynamics and the extension to higherdimensional parameter spaces. For a unique
        admissible solution, the observed optimal Grover iteration count is compared with
        Corollary~\ref{cor:standard_grover_limit}. Increasing the tolerance \(\varepsilon\) produces
        multiple admissible solutionsand is used to test the corresponding generalized Grover behavior.

        \item \textbf{Linear inhomogeneous problem.} We then consider a problem with spatially varying
        forcing. By varying \(n_X\), we compare a case in which the exact solution is representable by the
        ansatz with a case in which it is not. This separates discretization error from approximation error
        and shows how the two interact. In the same setting, we also use representative tolerance values to
        check the amplification dynamics predicted by
        Theorem~\ref{theorem:coherent_amplitude_amplification}. This serves as a consistency test for the
        weighted-oracle mechanism, including the dependence of the effective Grover angles and sampling
        distributions on the induced marked parameter sets.

        \item \textbf{Nonlinear manufactured problem.} We next study a nonlinear manufactured problem for
        which the ansatz class contains an exact solution. The parameters \(\varepsilon\), \(n_Z\), and
        \(\alpha\) are varied to assess the effects of tolerance, finite precision, and nonlinearity. The
        focus is the induced residual landscape: larger \(\alpha\) may create severallow-residual regions in
        parameter space, which can be amplified even when they are not true solutions of the boundary value
        problem. \item \textbf{Oracle resource scaling.} Finally, we evaluate thereversible oracle
        implementation. After decomposing the circuits into a native gate basis, we record gate counts,
        circuit depth, and qubit number as functions of \(n_X\), and compare the observed scaling with
        Theorem~\ref{thm:oracle_cost}.
    \end{enumerate}
    Across all experiments, we employ discretizations of
    the spatialdiscretization parameter \(n_X\), the parameter space resolution \(n_W\), and the value
    register precision \(n_Z\), and systematically vary the oracle tolerance \(\varepsilon\). While the
    number of Grover iterations \(k\) is explicitly varied in the baseline experiment to study amplitude
    amplification, theremaining experiments primarily rely on oracle-based landscape exploration
    combined with fixed-depth Grover steps, enabling a consistent analysis of residual structures under
    varying discretization and tolerance parameters.
    \subsection{Error Decomposition}
    The discretized and parametrized formulation introduces three main sources of approximation error.
    \paragraph{Ansatz approximation error}
    Let \(u^\ast\) denote the exact solution and \(\mathcal
    U=\{u(\cdot,w)\mid w\in W_m\}\) the ansatz class. We define
    \[
        e_{\mathrm{approx}} := \inf_{w\in
        W_m} \|u^\ast-u(\cdot,w)\|_X, \qquad \|v\|_X:=\max_{x_i\in X}|v(x_i)|,
    \]
    where
    \(X=\{x_i\}_{i=0}^{N_X-1}\) is the collocation grid. In general, \(e_{\mathrm{approx}}>0\) unless
    \(u^\ast\in\mathcal U\). Thus, even a vanishing discrete residual does not by itself imply agreement
    with the exact solution on the continuous domain.
    \paragraph{Spatial discretization error}
    The differential operator is evaluated on the finite
    collocationgrid. For the second derivative we use the centered finite-difference approximation
    \[
        D_h^2u(x) = \frac{u(x-h)-2u(x)+u(x+h)}{h^2} = u''(x)+\mathcal O(h^2),
    \]
    for sufficiently smooth
    \(u\) \cite{Leveque:2007}.
    \paragraph{Finite-precision quantization error}
    Residual values and intermediate quantities are
    represented by fixed-point registers. A value-register precision \(p_Z\) induces a quantization
    scale \(2^{-p_Z}\), leading to rounding, truncation, and possible accumulation errors in the
    reversible arithmetic circuit.

    The overall approximation quality is governed by the interactionof these three effects.
    \section{Numerical Experiments}
    \label{sec:experimental_analysis}
    All simulations were performed using Qiskit (version 2.3.0) withthe Aer simulator (version 0.17.2).
    Each quantum circuit was executed with 1000 measurement shots. The spatial and parameter registers
    use different fixed-point conventions. The spatial register encodes unsigned coordinates on
    \([0,1]\). Since the finite-difference residual evaluation uses shifted arguments such as \(x+h\),
    we choose \(n_X=p_X+1\), where the additional qubit provides overflow capacity for intermediate
    coordinate arithmetic without enlarging the physical collocation domain. The parameter register
    encodes the search space by a signed fixed-point representation: for each parameter, \(n_W=p_W+1\),
    with one sign bit and \(p_W\) fractional bits, yielding a uniform grid on\([-1,1)\) with spacing
    \(2^{-p_W}\). The value-register size \(n_Z\) is chosen from a heuristic boundon the residual range,
    \begin{equation}
        n_Z \approx 1+p_Z+2p_X+r_{\mathrm{res}}. \label{nZguess}
    \end{equation}
    Here
    \(p_Z\) denotes the fixed-point precision of the encoded values, while \(2p_X\) reflects the
    \(h^{-2}\) scaling of the centered second-difference operator. The leading \(1\) accounts for the
    sign bit, and \(r_{\mathrm{res}}\) is an overflow margin, taken as \(r_{\mathrm{res}}=1\) unless
    stated otherwise.
    \begin{table}[!htbp]
        \centering \caption{Overview of the numerical experiments.} \begin{tabular}{c c
        c c c c c c} \toprule Exp. & Focus                  & Ansatz            & \(n_W\) & \(n_X\)     & \(n_Z\)       & \(\varepsilon\)       & \(k\) \\
        \midrule

        1                              & Amplification dynamics & \(u_1,\;u_2\)       & \(6,8\) & \(3\)       & \(9\)         & \(2^{-1}\!-\!2^{-4}\) & \(1\!-\!6\)                           \\
        2                              & Spatial resolution     & \(u_3,\;u_1\)       & \(4,5\) & \(3\!-\!5\) & \(10\!-\!14\) & \(2^{-1}\!-\!2^{-3}\) &
        \(1\!-\!3\)                                                                                                                                                                 \\

        3                              & Nonlinearity           & \(u_1\)             & \(6\)   & \(3\)       & \(8\!-\!10\)  & \(2^{-1}\!-\!2^{-4}\) & \(1\)                                 \\
        4                              & Scaling                & \(u_1,\;u_3,\;u_1\) & \(4\)   & \(3\!-\!6\) & \(6\)         & --                  & --                                  \\
        \bottomrule\end{tabular} \label{tab:experiment_overview}
    \end{table}
    \subsection{Experiment 1: Linear Baseline and Grover Behavior}
    As a baseline, we consider the linear problem
    \begin{equation}
        u''(x) + 1 = 0.
    \end{equation}
    Two ansatz configurations are studied:
    \begin{itemize}
        \item One-dimensional parameter space defined
        by \(u_1\) \item Two-dimensional parameter space defined by \(u_2\)
    \end{itemize}
    We perform an evaluation run over the Grover iteration count \(k\)for different oracle tolerances
    \(\varepsilon\).

    \noindent These experiments establish that the method successfully identifies low-residual regions
    and that Grover iterations lead to a concentration of probability mass near optimal parameter
    values.
    \paragraph{Linear baseline problem (one-dimensional case)}
    As a reference, we consider the
    linear problem
    \begin{equation}
        u''(x) + 1 = 0,
    \end{equation}
    for which the ansatz \(u_1(x) = w_1\, x(1-x)\) recovers the exact solution at \(w_1 = \tfrac{1}{2}\).

    In the one-dimensional setting, the parameter space is discretized uniformly and contains a single
    parameter \(w_1\). Fora fixed oracle tolerance \(\varepsilon\), the oracle assigns an effective score to
    each parameter value according to the fraction of collocation points at which the residual falls
    below the prescribed threshold.

    Starting from the uniform superposition, amplitude amplificationis applied using \(k\) Grover
    iterations. As predicted by the standard analysis of Grover's algorithm \cite{Grover:1996,
    Nielsen:2010}, the probability mass concentrates on the marked states as \(k\) increases. In the case
    of a unique valid solution, the success probability exhibits the characteristic sinusoidal
    amplification behavior, reaching a maximum at an optimal iteration count as a combination of
    \eqref{def_optimal_k} and \eqref{def_superposition_theta}
    \begin{equation}
        k_{\mathrm{cont}}^\ast =
        \frac{\pi}{4 \arcsin\!\left(\sqrt{K_W/N_W}\right)} -\frac{1}{2}, \qquad k^\ast = \left\lfloor
        k_{\mathrm{cont}}^\ast \right\rfloor .
    \end{equation}
    where \(N_W\) denotes the size of the search
    space. This behavior is clearly reflected in Fig.~\ref{fig:r-test-1dim}, where for decreasing oracle
    tolerance \(\varepsilon\) the set of valid solutions shrinks, eventually approaching the
    single-solution regime. In the case of a unique valid solution (\(K_W=1\)), the success probability
    exhibits the characteristic sinusoidal amplificationbehavior.

    For larger values of \(\varepsilon\), multiple parameter values satisfy the oracle condition, leading
    to a broader set of markedstates and a correspondingly different amplification pattern, inagreement
    with the generalized Grover setting. In the case of \(n_W=6\), the parameter space has cardinality
    \[
        N_W = 2^{n_W} = 64.
    \]
    \begin{figure}
        \centering \includegraphics[width=\textwidth]{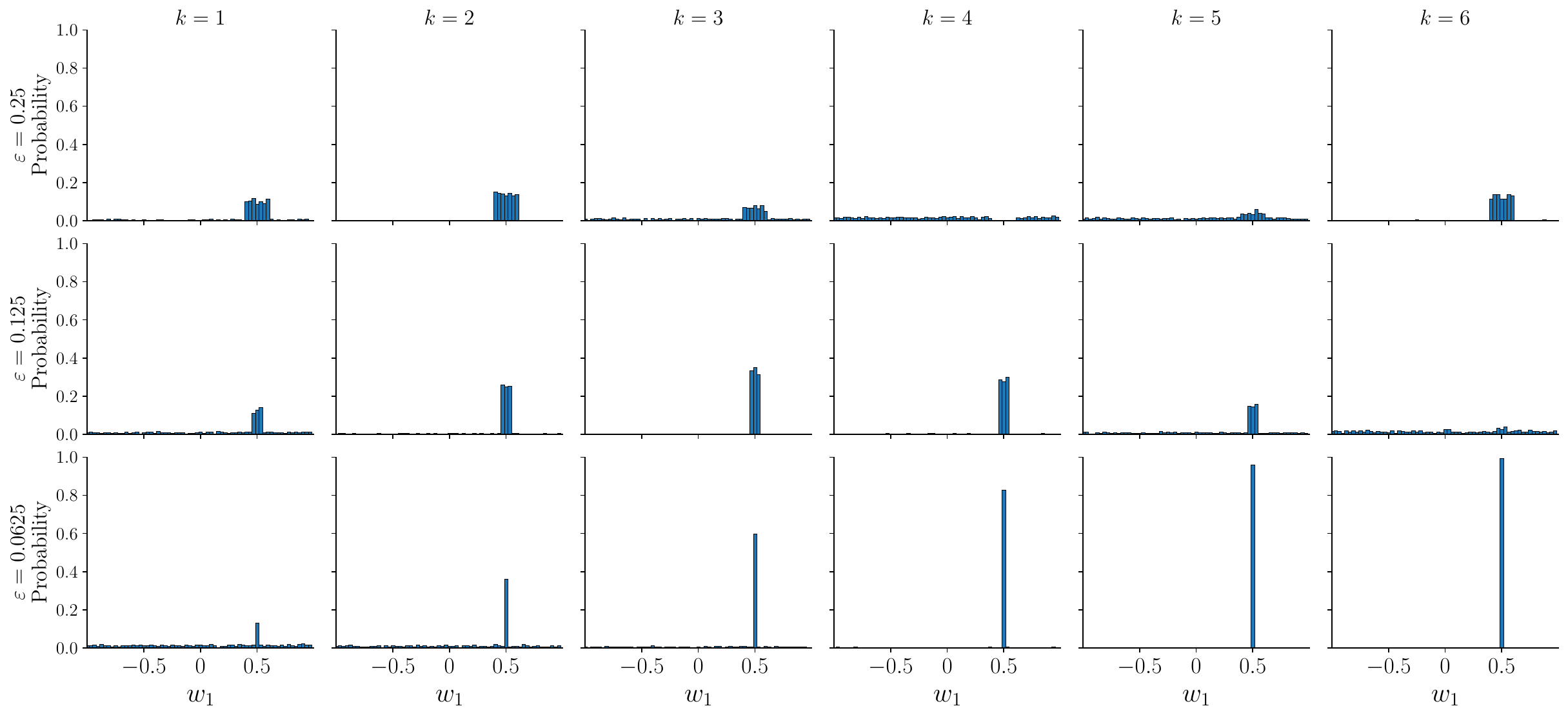}
        \caption{Probability distributions over the one-dimensional parameter space after \(k=1,\dots,6\)
        Grover iterations for different oracle tolerances \(\varepsilon\). As \(\varepsilon\) decreases,
        thenumber of marked states is reduced, leading to a stronger concentration of probability on a
        single parameter value, consistent with the predictions of amplitude amplification.}
        \label{fig:r-test-1dim}
    \end{figure}
    In Fig.~\ref{fig:r-test-1dim}, the number of marked solutions depends on the chosen oracle tolerance
    \(\varepsilon\). For \(\varepsilon = 2^{-1}, 2^{-2}, 2^{-3}\), we observe \(K_W=7,3,1\) marked candidates,
    respectively. Accordingly, the continuous theoretical optima for the number of Grover iterations
    (see Corollary~\ref{cor:standard_grover_limit}) are
    \[
        k_{\mathrm{cont}}^\ast \approx 1.83,\quad
        3.10,\quad 5.77,
    \]
    which correspond to the integer iteration counts
    \[
        k^\ast = 2,\quad 3,\quad 6,
    \]
    after rounding to the nearest integer, in agreement with the theoretical predictions. This
    dependence is reflected in the figure, where the probability mass concentrates more rapidly as the
    number of marked states increases.
    \paragraph{Linear baseline problem (two-dimensional case):}
    \begin{figure}
        \centering \includegraphics[width=\textwidth]{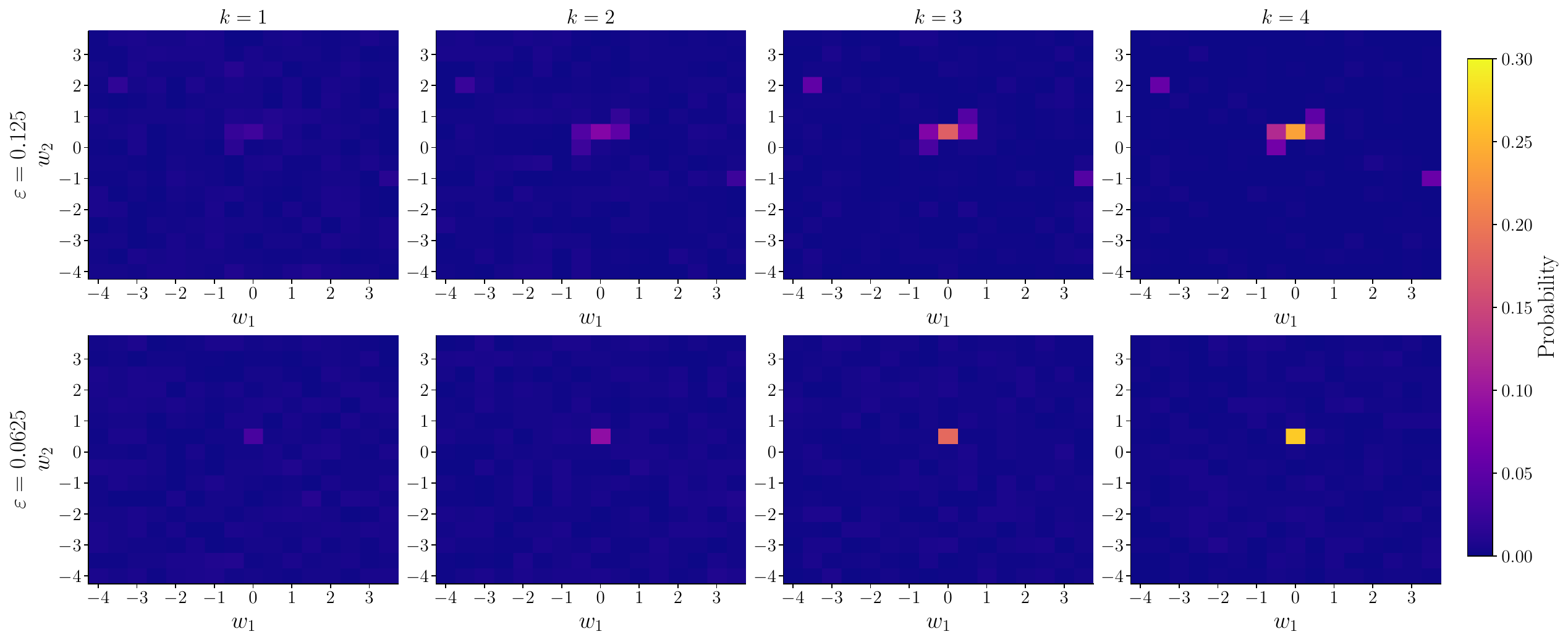}
        \caption{Probability distributions over the two-dimensional parameter space \((w_1,w_2)\) after
        \(k=1,\dots,4\) Grover iterations for different oracle tolerances \(\varepsilon\). In contrast to the
        one-dimensional case, the amplified states form localized regions in parameter space. As
        \(\varepsilon\) decreases, these regions shrink and the probability mass concentrates around the true
        solution, illustrating the extension of amplitude amplification to higher-dimensional parameter
        spaces.} \label{fig:r-test-2dim}
    \end{figure}
    We extend the analysis to a two-dimensional parameter space using the ansatz
    \begin{equation}
        u_2(x)
        = w_2\, x(1-x) + w_1\, x^2 (1-x)^2.
    \end{equation}
    In this setting, the parameter space is
    discretized uniformly with \(n_W = 8\) qubits, resulting in a search space of size $N_W = 2^{n_W} =
    256$.

    As in the one-dimensional case, the oracle marks all parameter pairs \((w_1,w_2)\) whose residual
    falls below a prescribed tolerance \(\varepsilon\). For \(N=256\) and a single marked solution
    (\(K_W=1\)), the success probability after \(k\) Grover iterations is given by
    \begin{equation}
        P(k) =
        \sin^2\!\big((2k+1)\theta\big), \qquad \theta = \arcsin\!\left(\frac{1}{16}\right).
    \end{equation}
    For \(k=4\), this yields a probability of approximately \(0.29\), which is shown by
    Fig.~\ref{fig:r-test-2dim}. In contrast to the one-dimensional case, the solution structure now
    appears as localized regions in a two-dimensional landscape, illustrating that the method naturally
    extends to higher-dimensional parameter spaces.
    \subsection{Experiment 2: Influence of Spatial Resolution}
    In this experiment, we isolate the residual-threshold oracle from the subsequent
    amplitude-amplification dynamics and study the influence of the spatial resolution. Unlike in
    Experiment~1, the oracle is evaluated directly for all \(w\in W_m\), without applying amplitude
    amplification.

    We consider the linear boundary value problem
    \begin{equation}
        \label{eq:experiment2_problem}
        u''(x)+x=0, \qquad x\in[0,1], \qquad u(0)=u(1)=0,
    \end{equation}
    corresponding to
    Eq.~\eqref{eq:test_linear}, with exact solution\(u^\ast(x)=\tfrac16 x(1-x^2)\). We compare the
    expressive ansatz \(u_3(x)=w_2x(1-x^2)\), which is exact for \(w_2^\ast=\tfrac16\), with the
    restricted ansatz \(u_1(x)=w_2x(1-x)\). For \(u_1\), \(r_C(x,w_2)=x-2w_2\), and minimizing the mean
    absolute residualover \([0,1]\) gives \(w_2^\ast=\tfrac14\).

    We fix the parameter-space discretization to \(n_W=4\), the target-register precision to \(p_Z=5\),
    and choose \(n_Z\) according to the heuristic estimate in Eq.~\eqref{nZguess}. Only the spatial
    resolution \(n_X\) is varied.
    \subsubsection{Residual Measures}
    For a pointwise residual \(r(x_i,w)\), we define the aggregated collocation residual by
    \begin{equation}
        \label{eq:aggregated_collocation_residual} \bar r(w) = \frac{1}{N_X}
        \sum_{i=0}^{N_X-1} |r(x_i,w)|.
    \end{equation}
    We distinguish the analytical reference residual
    \(\bar r_C\), obtained using the exact derivative of the ansatz; the finite-difference residual
    \(\bar r_{FD}\), obtained using \(D_h^2u\); and the quantum residual \(\bar r_Q\), evaluated by the
    reversible fixed-point arithmeticcircuit.

    Since all three residuals are evaluated on the same spatial and parameter grids, the difference
    between \(\bar r_C\) and \(\bar r_{FD}\) captures effects of the discrete residual evaluation,
    whereas the difference between \(\bar r_{FD}\) and \(\bar r_Q\) isolates the additional fixed-point
    quantization error. We further use
    \begin{equation}
        r_C^{\min}(w) = \min_i |r_C(x_i,w)|, \qquad
        r_C^{\max}(w) = \max_i |r_C(x_i,w)|
    \end{equation}
    to characterize the spatial variation of the
    analytical residual.

    For both ansatz functions considered below, the centered finite-difference approximation of the
    second derivative is exact at interior grid points. The dependence on \(n_X\) therefore results from
    spatial sampling, aggregation over collocation points, boundary treatment, and fixed-point
    quantization rather than finite-difference truncation error.
    \subsubsection{Ansatz Function \(u_3\)}
    For \( u_3(x)=w_2x(1-x^2), \) the analytical residual is \( r_C(x,w_2)=(1-6w_2)x. \) Hence, the
    exact solution is contained in the ansatz space and is attained at \(w_2^\ast=\tfrac16\). Since this
    value is not represented exactly on the chosen parameter grid, the nearest admissible value is
    \(w_2=\tfrac18\).

    Figure~\ref{fig:nX-test} shows that all three residual evaluations attain their minimum near this
    discretized optimum. The agreement between \(\bar r_{FD}\) and \(\bar r_C\) confirms the exactness
    of the centered finite difference for this cubic ansatz, while deviations of \(\bar r_Q\) from
    \(\bar r_{FD}\) quantify the remaining fixed-point effects.
    \begin{figure}[!htbp]
        \centering \includegraphics[width=\textwidth]
        {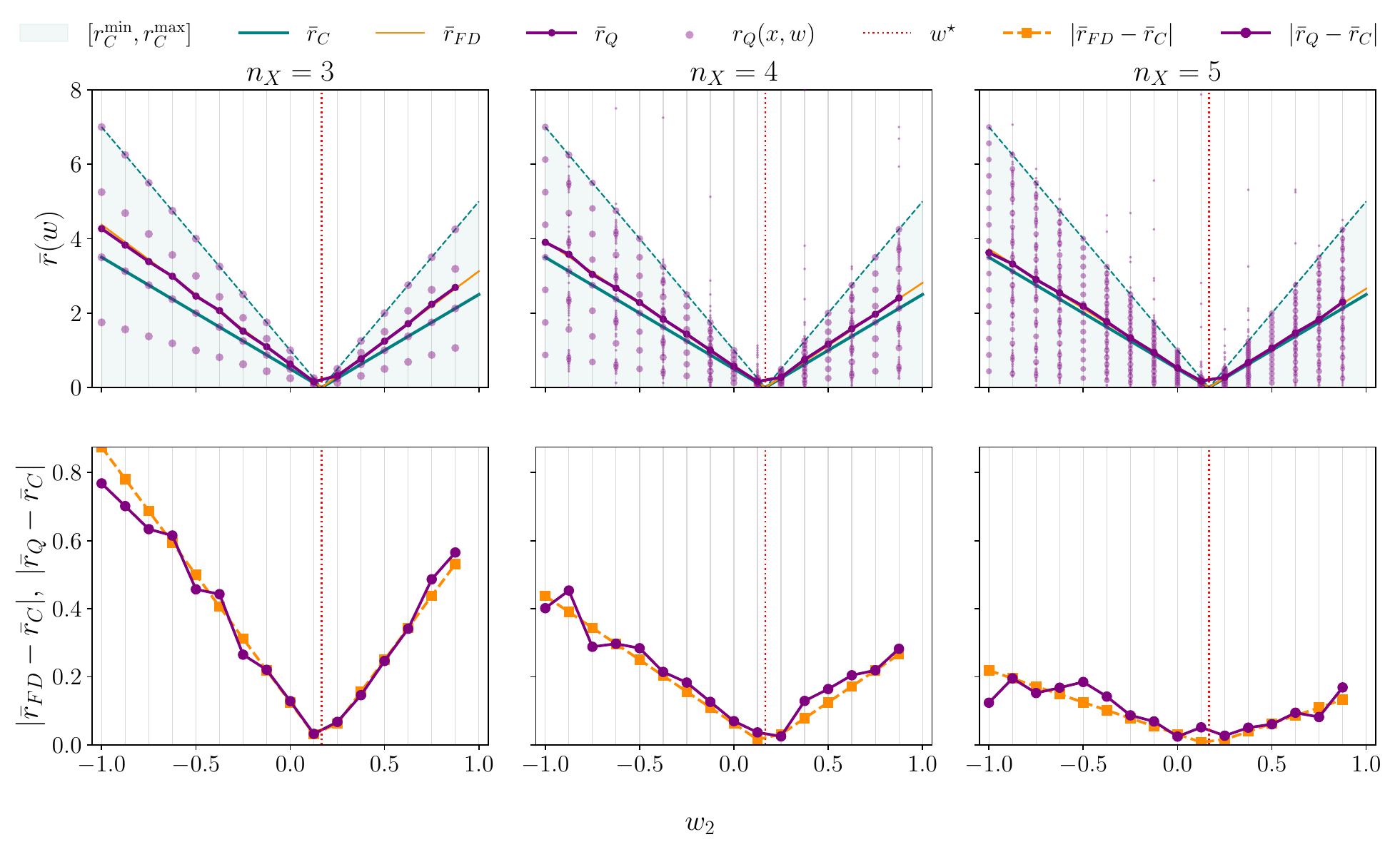} \caption{ Residual landscape as a function of \(w_2\) for
        increasing spatial resolution \(n_X\), using the expressive ansatz \(u_3\). The exact optimum is
        \(w_2^\ast=1/6\), while \(w_2=1/8\) is the nearestadmissible value on the chosen parameter grid.
        Differences between the finite-difference and quantum residuals reflect fixed-point quantization
        effects. } \label{fig:nX-test}
    \end{figure}
    \subsubsection{Ansatz Function \(u_1\)}
    For \( u_1(x)=w_2x(1-x), \) the analytical residual is \( r_C(x,w_2)=x-2w_2. \) The exact solution
    is not contained in this ansatz space. Minimizing the mean absolute residual over \(x\in[0,1]\)
    gives \(w_2^\ast=\tfrac14\), rather than \(\tfrac16\), because \(2w_2\) must equal the median of the
    interval. The residual minimum therefore remains nonzero even at the optimal parameter value.

    As shown in Fig.~\ref{fig:nX-test-v2}, the quantum residual closely follows the finite-difference
    residual, whereas the nonvanishingminimum is caused by the limited expressivity of the ansatz rather
    than by the reversible implementation.
    \begin{figure}[!htbp]
        \centering \includegraphics[width=\textwidth]
        {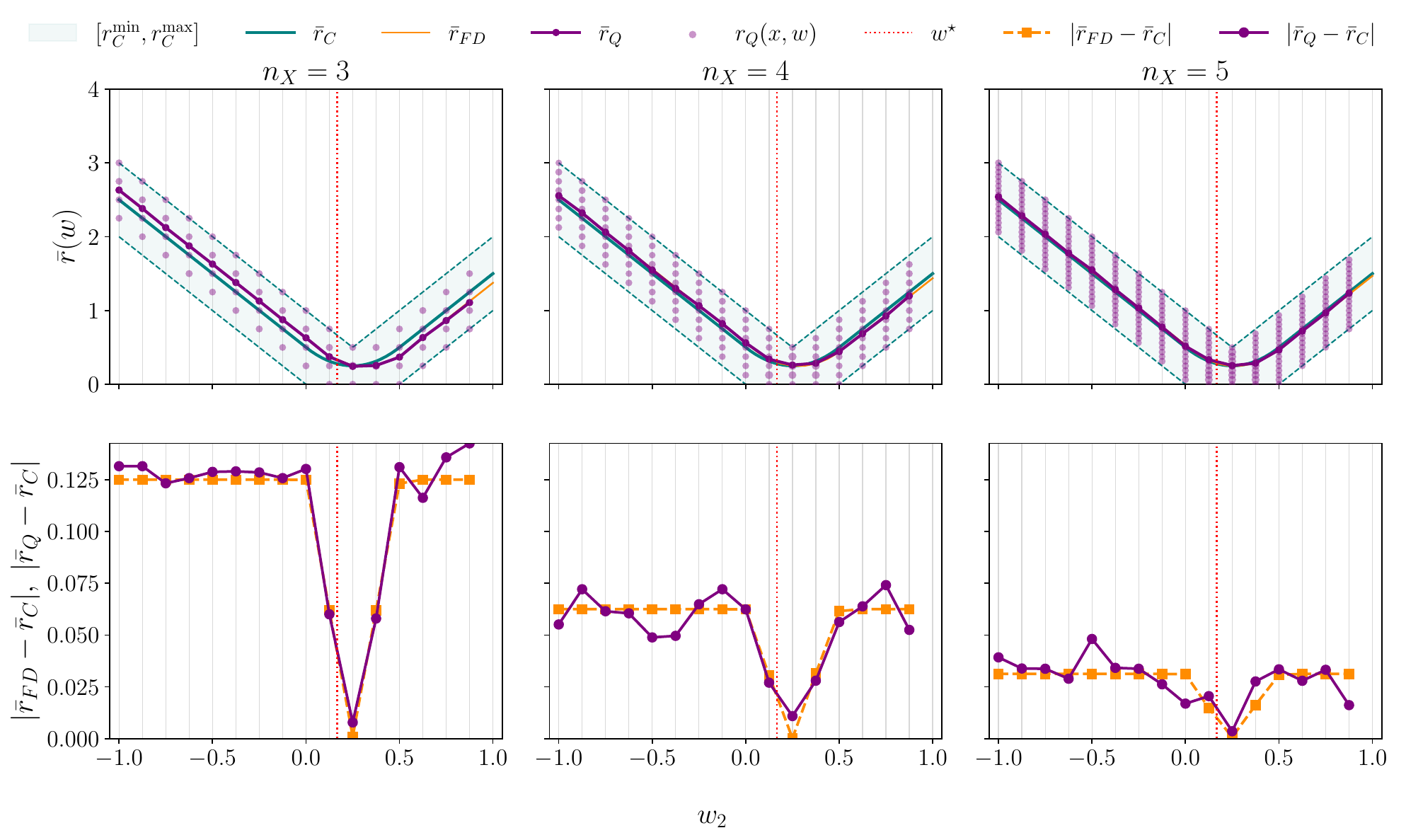} \caption{ Residual landscape as a function of \(w_2\) for
        increasing spatial resolution \(n_X\), using the restricted ansatz \(u_1\). Theresidual-minimizing
        parameter approaches \(w_2=1/4\), while theminimum remains nonzero because the exact solution is not
        contained in the ansatz space. The quantum residual closely follows the finite-difference residual.
        } \label{fig:nX-test-v2}
    \end{figure}
    \subsubsection{Comparison of Residual-Evaluation Errors}
    To quantify deviations from the analytical reference, we use
    \begin{equation}
        \label{eq:experiment2_mape} \operatorname{MAPE}(\bar r,\bar r_C) = \frac{100\%}{N_W} \sum_{w\in W_m}
        \frac{|\bar r(w)-\bar r_C(w)|}{|\bar r_C(w)|}.
    \end{equation}
    For the chosen parameter grid, the
    denominator does not vanish for any evaluated parameter value.

    Figure~\ref{fig:nX-mape} compares the MAPE of \(\bar r_{FD}\) and \(\bar r_Q\) for both ansatz
    functions. The finite-difference error reflects the discrete collocation and boundary treatment,
    while the additional quantum error is caused by fixed-point quantization.
    \begin{figure}[!htbp]
        \centering \begin{minipage}{0.48\textwidth} \centering
        \includegraphics[width=\textwidth] {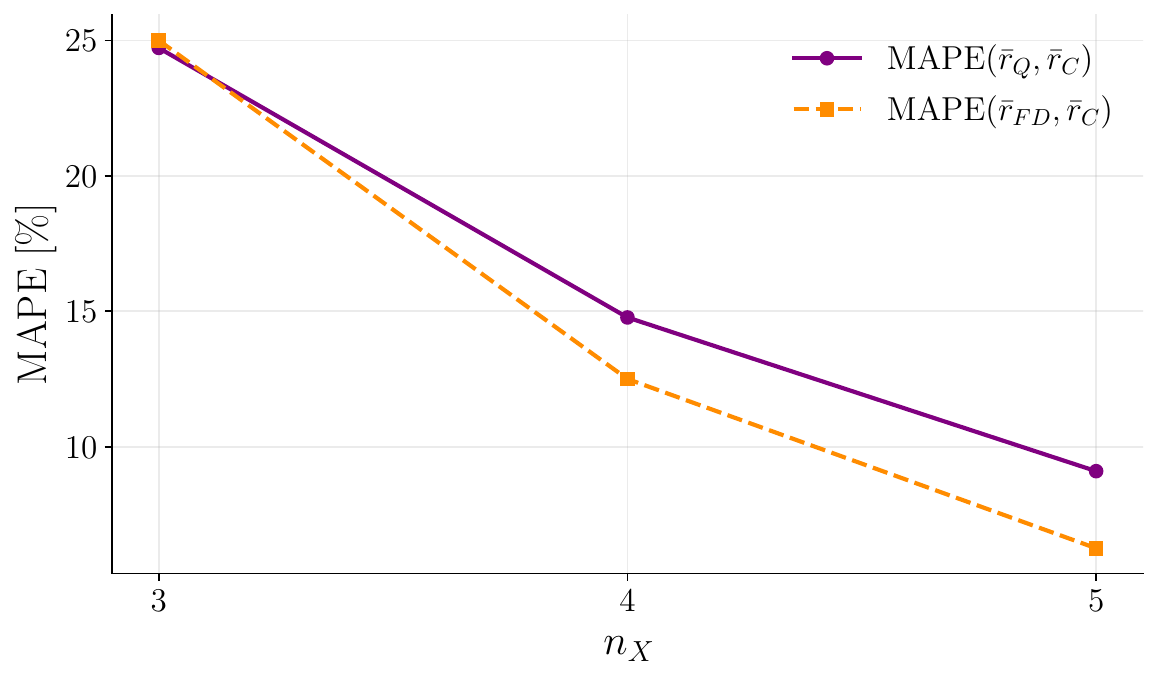} \end{minipage} \hfill
        \begin{minipage}{0.48\textwidth} \centering \includegraphics[width=\textwidth]
            {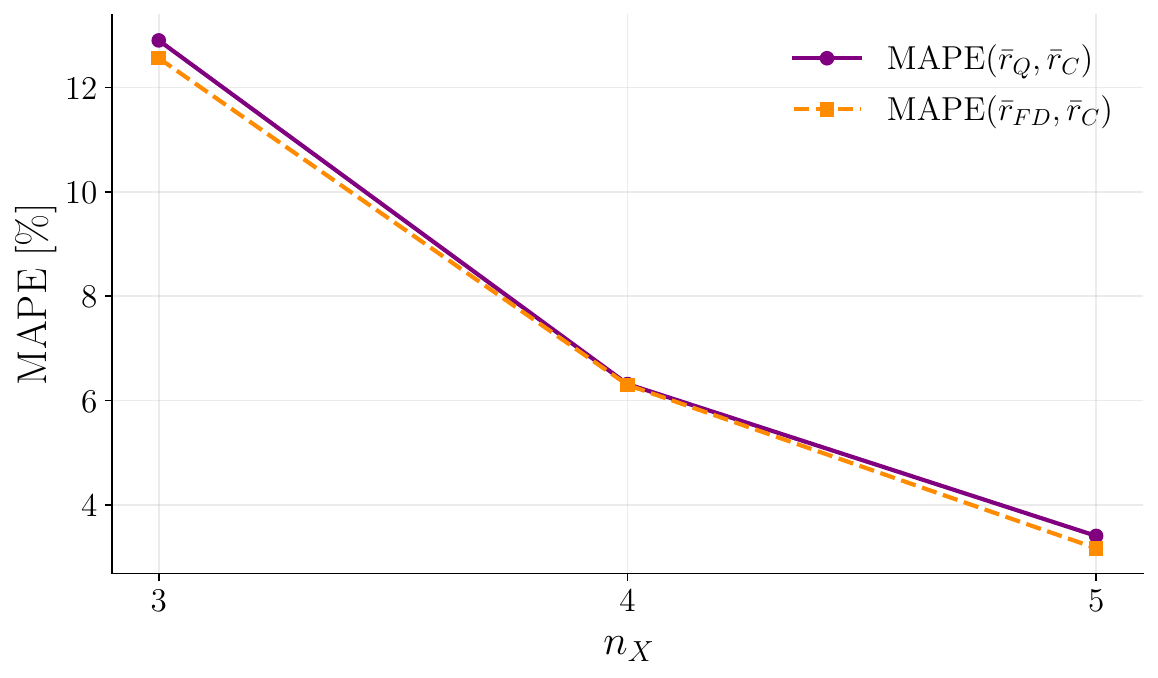} \end{minipage} \caption{ Mean absolute percentage error of the
            finite-difference and quantum residuals relative to the analytical reference for increasingspatial
            resolution \(n_X\). The left and right panels correspond to the expressive ansatz \(u_3\) and the
            restricted ansatz \(u_1\),respectively. } \label{fig:nX-mape}
        \end{figure}
        The results separate three effects: the parameter-grid error for\(u_3\), the intrinsic ansatz error
        for \(u_1\), and the fixed-point quantization error of the reversible residual evaluation. We next
        study how the resulting oracle phases affect the amplitude-amplification dynamics.
        \subsubsection{Empirical Verification of the Amplification Dynamics}
        According to Theorem~\ref{theorem:coherent_amplitude_amplification}, each spatial state \(x_i\)
        evolves with the effective Grover angle
        \[
            \theta_i = \arcsin\!\left( \sqrt{\frac{|\mathcal
            M_i|}{N_W}} \right), \qquad \mathcal M_i = \{j\in\mathcal I_W\mid \chi(i,j)=1\},
        \]
        where \(\mathcal
        M_i\) denotes the set of marked parameter indices at the collocation point \(x_i\). Larger marked
        sets therefore leadto faster amplification, whereas smaller sets produce slower but more selective
        dynamics.

        We consider \(n_X=3\), \(n_W=5\), and \(\varepsilon\in\{0.125,0.25,0.5\}\). For each \(w_j\in W_m\),
        the weighted-oracle response \(q(w_j)\) is the fraction of collocation points satisfying the
        residual threshold. Hence,
        \[
            \frac{1}{N_W} \sum_{j=0}^{N_W-1}q(w_j)
        \]
        gives the average oracle
        response over the parameter register.

        For \(\varepsilon=0.5,0.25,0.125\), the marked-set sizes are \(|\mathcal M_i|=7,3,1\), respectively,
        for every spatial sector. Thus, all sectors have identical amplification angles, and the dynamics
        reduce to the standard Grover regime of Corollary~\ref{cor:standard_grover_limit}. The corresponding
        continuous optimal iteration counts,
        \[
            k_{\mathrm{cont}}^\ast = \frac{\pi}{4\theta_i} -\frac12,
        \]
        are approximately \(1.1\), \(2.0\), and \(3.9\).

        Figure~\ref{fig:weighted-grover-example} shows the residual functions, marked regions, and measured
        parameter distributions after \(k=1,2,3\) amplification iterations. Decreasing
        \(\varepsilon\)narrows the marked region and reduces the amplification angle. Larger tolerances
        therefore lead to faster amplification and earlier overrotation, whereas smaller tolerances yield
        slower but more selective concentration near \(w^\ast=\tfrac16\). The observed dynamics agree with
        Theorem~\ref{theorem:coherent_amplitude_amplification}.
        \begin{figure}[!htbp]
            \centering \includegraphics[width=1\textwidth]
            {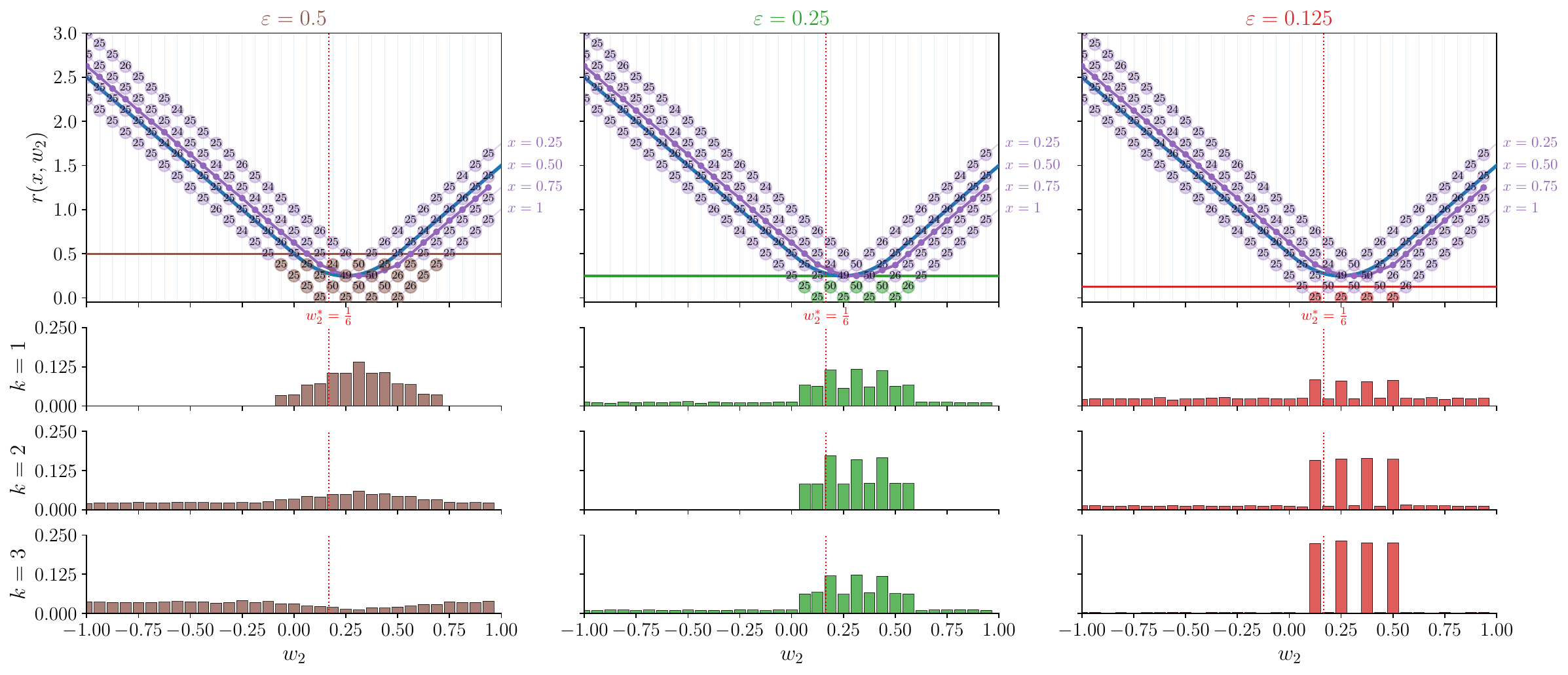} \caption{ Empirical spatially conditioned amplitude
            amplification for different oracle tolerances \(\varepsilon\). The purple curves show the residual
            functions \(r(x_i,w_2)\), and the labels indicate the empirical sampling probabilities of the
            corresponding parameter values. The top row shows the marked regions with \(|\mathcal M_i|=7,3,1\)
            from left to right, while the lower rows show the measured parameter distributions after \(k=1,2,3\)
            amplification iterations. } \label{fig:weighted-grover-example}
        \end{figure}
        \subsection{Experiment 3: Nonlinear Residuals and Finite-Precision Effects}
        We consider the manufactured nonlinear boundary value problem
        \begin{equation}
            u''(x) + u(x) +
            \alpha u(x)^2 + 2\lambda - \lambda x(1-x) - \alpha \lambda^2 x^2(1-x)^2 = 0,
        \end{equation}
        for \(
        x\in[0,1] \) with homogeneous boundary conditions \( u(0)=u(1)=0 \) and nonlinearity parameters \(
        \alpha\in\{8,16,32\}. \) By construction, \( u^\ast(x) = \lambda x(1-x) \) is an exact solution.
        Throughout this experiment, we set \( \lambda=\frac13. \)

        In contrast to the preceding experiments, the residual now contains the nonlinear term \( u(x)^2 \).
        The goal of this experiment is therefore twofold:
        \begin{enumerate}
            \item to analyze how increasing
            nonlinearity modifies the residual landscape and the resulting amplification behavior,

            \item to quantify the influence of finite-precision arithmetic on the oracle evaluation.
        \end{enumerate}
        \subsubsection{Nonlinearity-Induced Residual Structure}
        We first fix the value-register precision to \( p_Z=5 \) and vary the nonlinearity parameter \(
        \alpha \). For weak nonlinearity, the residual landscape is sharply localized around the physical
        solution \(w^\ast=\frac13\). As \(\alpha\) increases, additional low-residual regions and local
        minima emerge.

        Although the problem is constructed so that \(u^\ast(x)=\lambda x(1-x)\) remains an exact solution,
        the nonlinear residual creates competing parameter regions with comparable residual values.
        Consequently, the oracle assigns significant response toseveral regions rather than marking a single
        isolated one.

        This structure is reflected in the amplified sampling distributions in the lower panels of
        Fig.~\ref{fig:quantization-effects}. Larger tolerances \(\varepsilon\) produce broader amplified
        regions, whereas smaller tolerances increase selectivity and suppress parts of the secondary minima.
        Even for strong nonlinearity, \(\alpha=32\), the amplification process reliably identifies the
        dominant low-residual regions.

        This behavior agrees with the weighted-oracle framework of
        Theorem~\ref{theorem:coherent_amplitude_amplification}, in whichseveral parameter regions with large
        local oracle response \(q(w_j)\) are amplified simultaneously.
        \subsubsection{Finite-Precision
        Quantization Effects}
        We now fix the nonlinearity parameter to \( \alpha=8 \) and investigate the influence of
        finite-precision arithmetic by varying the value-register precision
        \[
            p_Z\in\{3,4,5\}.
        \]
        As an empirical measure of quantization artifacts, we introduce the out-of-band probability
        \begin{equation}
            P_{\mathrm{OOB}}(w) = 100 \sum_{o\in\mathcal O(w)} p(o\mid w)\, \mathbf1\!\left[
            |r_Q(o,w)| \notin \left[ r_C^{\min}(w)-\delta_{\mathrm{band}}, r_C^{\max}(w)+\delta_{\mathrm{band}}
            \right] \right],
        \end{equation}
        which measures the fraction of quantum outcomes whose residual
        values lie outside the classical reference interval up to a small tolerance \(
        \delta_{\mathrm{band}} \). This tolerance prevents minor numerical deviations near the interval
        boundaries from being counted as meaningful discrepancies.
        \begin{figure}[!htbp]
            \centering
            \includegraphics[width=\textwidth]{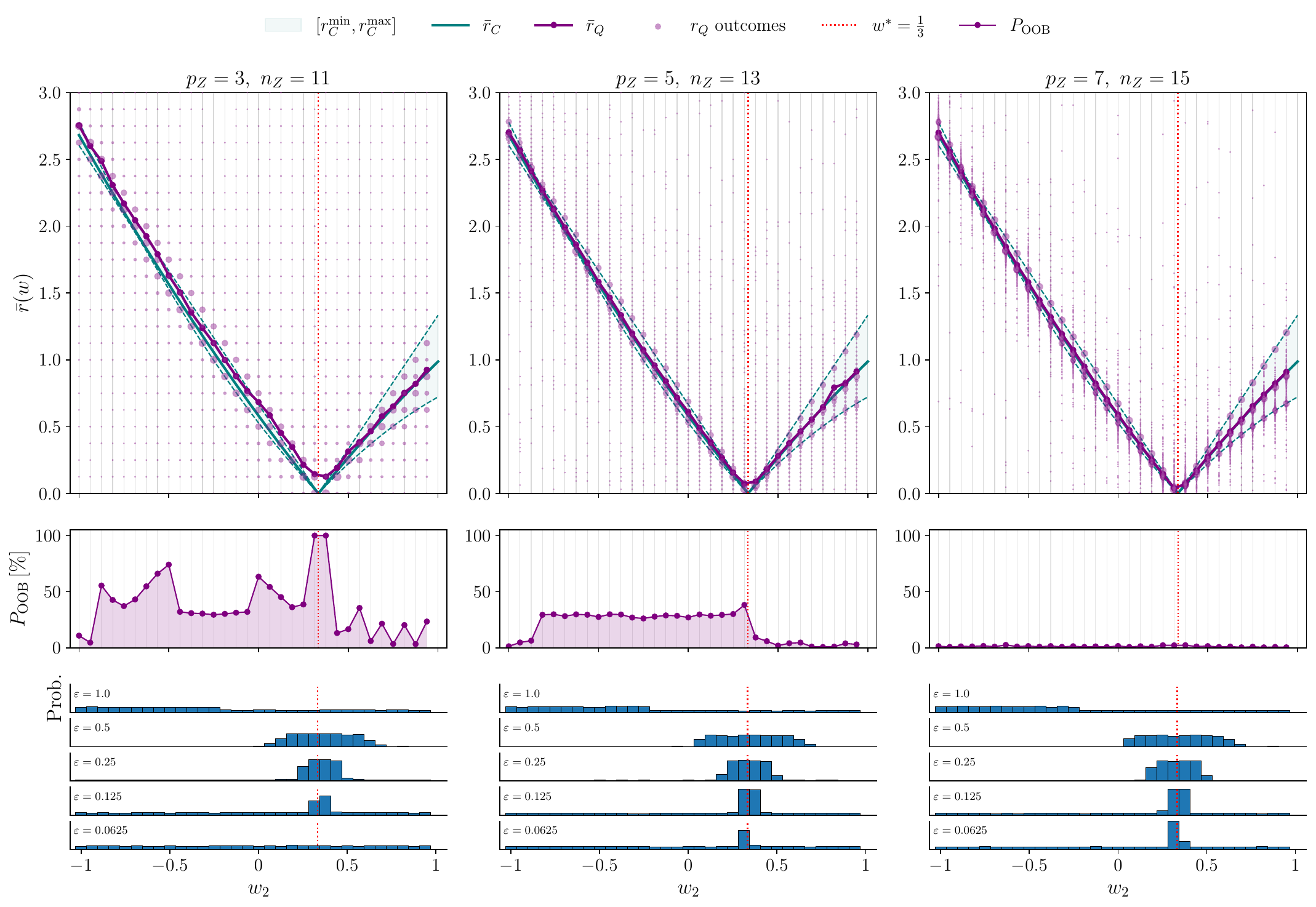}
            \caption{ Influence of finite-precision arithmetic for different value-register precisions \(p_Z\).
            Top: classical and quantum residual landscapes together with sampled measurement outcomes. Middle:
            out-of-band probability \(P_{\mathrm{OOB}}\). Bottom: amplified parameter distributions for
            different oracle tolerances \(\varepsilon\). Increasing \(p_Z\) reduces quantization artifacts. }
            \label{fig:quantization-effects}
        \end{figure}
        \begin{figure}[!htbp]
            suppresses quantization-induced deviations in the 
        \end{figure}
        The results are shown in Fig.~\ref{fig:quantization-effects}. Atlow precision, the quantum residual
        evaluation visibly deviates fromthe classical finite-difference reference because of rounding and
        limited residual resolution, as reflected by elevated values of \(P_{\mathrm{OOB}}\). Increasing the
        value-register precision systematically reduces these deviations and drives the quantum residual
        toward the classical discretized residual. Meanwhile, the dominant amplified parameter region
        remains stable, indicating that the amplification process is robust against moderate quantization
        errors. At very low precision, however, the residual grid may be too coarse for the prescribed
        oracle tolerance. For \(p_Z=3\) and \(\varepsilon=2^{-4}\), the residual spacing is \(2^{-3}\), so
        the threshold \(2^{-4}\) is not representable. Consequently, no parameter value satisfies the
        threshold condition and the oracle marks no admissible candidate. This failure is therefore caused
        not by quantization noise itself, but by insufficient fixed-point resolution. The oracle tolerance
        must remain compatible with the representable residual precision.
        \subsection{Experiment 4: Resource Scaling of the Oracle Implementation}
        After validating the oracle accuracy and amplification behavior,we analyze the resource requirements
        of the resulting quantum circuits. Since multi-controlled phase operations are not native to current
        gate-based quantum hardware, all circuits were transpiled into the native gate basis prior to
        resource estimation.

        For the scaling analysis, we vary the spatial discretization parameter \(n_X\) while keeping \(p_W =
        3\), \(p_Z = 3\), and the ansatz configuration fixed. For each transpiled circuit, we record the
        native single-qubit gate count, native two-qubit gate count, and total native gate count.

        The observed scaling behavior is consistent with the theoreticaloracle-cost estimate derived in
        Theorem~\ref{thm:oracle_cost},
        \[
            C_{\mathrm{oracle}} = O\!\left( n_Z n_X^{D_f} n_W^{Q} (D_f+Q+1)^2
            + n_Z^2 \right).
        \]
        which predicts polynomial growth with respect to the spatial discretization and parameter-space
        resolution.

        Figure~\ref{fig:nops-analysis} shows the resulting scaling behavior for the benchmark problems
        considered in this work. Dotted lines denote native single-qubit gates, dashed lines native
        two-qubit gates, and solid lines the total native gate count after transpilation. Although the
        absolute gate counts differ between problem classes, all cases exhibit clear polynomial growth as
        \(n_X\) increases.
        \begin{figure}[!htbp]
            \centering
            \includegraphics[width=1.0\textwidth]{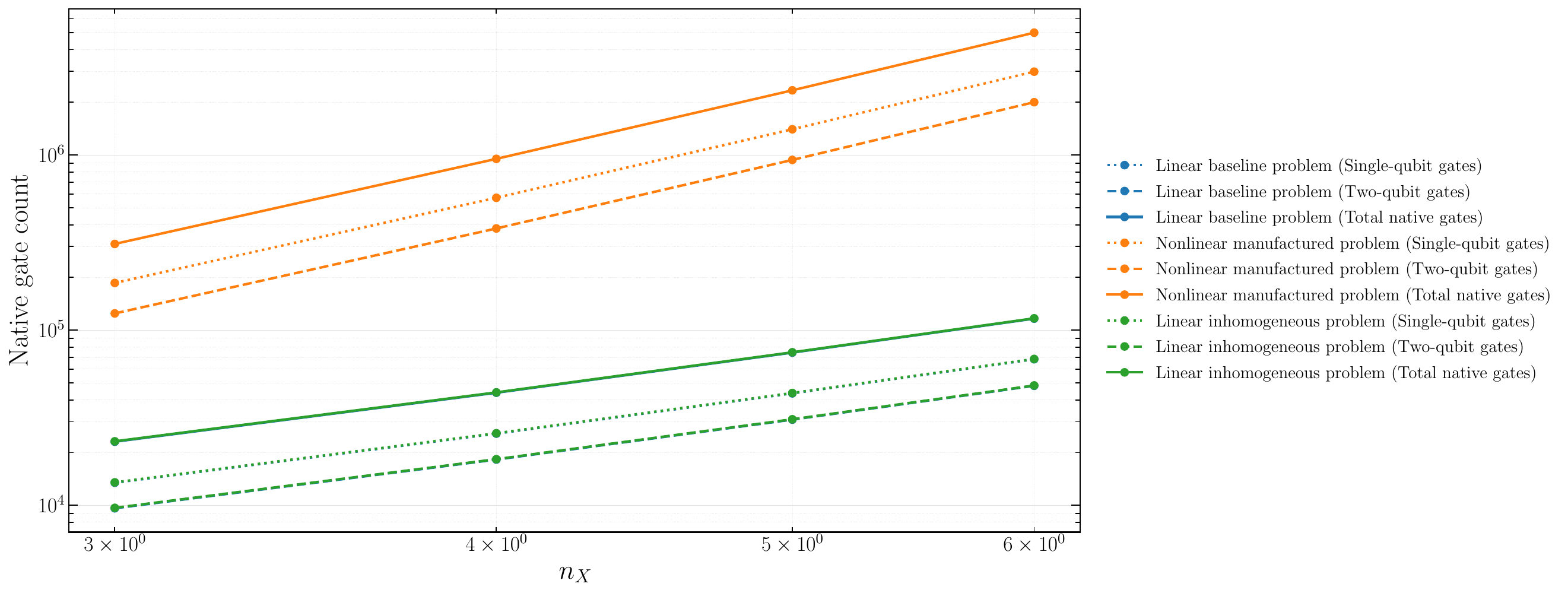}

            \caption{ Scaling behavior of the transpiled oracle implementation cost asa function of the spatial
            discretization parameter \(n_X\) for thedifferent benchmark problems considered in this work. All
            oraclecircuits were transpiled intothe native gate basis before evaluation. Dotted lines denote
            native single-qubit gate counts, dashed lines native two-qubit gate counts, and solid lines the
            total native gate count. } \label{fig:nops-analysis}
        \end{figure}
        \section{Conclusion}
        \label{sec:conclusion} We have introduced a residual-based quantum collocation
        framework for one-dimensional boundary value problems and shown that its amplification dynamics
        differ fundamentally from the standard Grover setting. Rather than being governed by a single global
        marking condition, the search process evolves through a coherentsuperposition of spatially
        conditioned amplitude amplification rotations, with success probabilities determined by the
        distribution of local residual-threshold responses across the collocation grid.

        The numerical experiments support the theoretical predictions and illustrate how discretization,
        ansatz expressivity, oracle tolerance, and finite-precision arithmetic influence both approximation
        quality and amplification behavior. Furthermore, the results show that the reversible residual
        oracle scales polynomially in the logarithmic size of the spatial register, while the parameter
        search benefits from the usual quadratic speedup provided by amplitude amplification. The present
        study focuses on low-dimensional polynomial ansatz spaces chosen to permit compact oracle
        constructions and modest qubit requirements, while the general framework is not restricted to these
        particular choices.Several directions for future research remain open. These include extensions to
        higher-dimensional partial differential equations, the investigation of more expressive ansatz
        spaces, adaptive collocation strategies, and scalability studies of the reversible oracle
        construction. Of particular interest are high-dimensional nonlinear PDEs arising in optimal control
        and stochastic control, such as Hamilton--Jacobi--Bellman equations,where classical
        discretization-based methods are often limited by the curse of dimensionality. Equally important is
        the study of the proposed framework on quantum hardware, where noise, decoherence and finite
        sampling effects may significantly influence the residual-based amplification dynamics. Overall, the
        framework establishes a foundation for residual-based quantum search methods for differential
        equations and highlights a novelconnection between collocation techniques and coherent
        amplitudeamplification.

        \appendix

        \section{Quantum Registers and Discretized Spaces}
        \label{subsec_quantum_registers}
        \paragraph{State register}
        Let \(X\) be an \(n_X\)-qubit register with associated Hilbert space
        \(\mathcal H_X := (\mathbb C^2)^{\otimes n_X}\) with \(N_X =2^{n_X}\). Its computational basis is
        \[
            \{\ket{i}_X\}_{i=0}^{N_X-1}, \qquad \langle i\mid j\rangle_X=\delta_{ij},
        \]
        where, for the binary
        expansion
        \[
            i=\sum_{j=0}^{n_X-1} b_j2^j, \qquad b_j\in\{0,1\},
        \]
        we define
        \[
            \ket{i}_X :=
            \ket{b_{n_X-1}}\otimes\cdots\otimes\ket{b_0}.
        \]
        The register uses a fixed-point representation with \(f_X\) fractional bits. Thus, the same
        bitstring encodes the numerical value
        \[
            x=\sum_{j=0}^{n_X-1} b_j\,2^{j-f_X}, \qquad \ket{x}_X :=
            \ket{b_{n_X-1}}\otimes\cdots\otimes\ket{b_0}.
        \]
        We write \(\ket{i}_X\) when treating computational
        basis states as indexed elements of \(\mathcal H_X\), and \(\ket{x}_X\) when interpreting them as
        encoded numerical values.

        The spatial grid is a subset
        \[
            X\subset [0,1]\cap 2^{-f_X}\mathbb N_0.
        \]
        Not every computational
        basis state need correspond to an element of \(X\). In particular, the register may encode values
        slightly outside the physical domain \([0,1]\), as required for the reversible evaluation of
        finite-difference expressions such as \(x\pm h\).

        A pure state of the register is a normalized vector
        \[
            \ket{\psi}_X =
            \sum_{i=0}^{N_X-1}\alpha_i\ket{i}_X, \qquad \alpha_i\in\mathbb C, \qquad
            \sum_{i=0}^{N_X-1}|\alpha_i|^2=1.
        \]
        \paragraph{Parameter register}
        Let
        \[
            W_m \subset (2^{-f_w}\mathbb Z)^m
        \]
        denote the set of all parameter vectors representable
        with the chosen fixed-point encoding for an \(m\)-parameter ansatz. A parameter vector \(
        w=(w_1,\dots,w_m)\in W_m \) consists of \(m\) components, each encoded as a signed fixed-point
        number with \(I_w\) integer bits and \(f_w\) fractional bits. Including the sign bit, each component
        requires \( n_w = 1 + I_w + f_w \) qubits.

        If \( b^\ell_{n_w-1}\dots b^\ell_0 \) denotes the bitstring encoding the component \(w_\ell\), then
        \[
            w_\ell = -\,b^\ell_{n_w-1}2^{\,n_w-1-f_w} + \sum_{s=0}^{n_w-2} b^\ell_s\,2^{\,s-f_w}, \qquad
            b^\ell_s\in\{0,1\}.
        \]
        which corresponds to the standard two's-complement fixed-point
        representation.

        The parameter register therefore consists of \(m\) blocks of \(n_w\) qubits and contains \( n_W = m
        n_w \) qubits in total.

        Its Hilbert space is \( \mathcal H_W := (\mathbb C^2)^{\otimes n_W} \) with computational basis
        \[
            \{\ket{j}_W\}_{j=0}^{N_W-1}, \qquad N_W = 2^{n_W}.
        \]
        Each basis state \(\ket{j}_W\) encodes a parameter vector
        \[
            w_j = (w_{j,1},\dots,w_{j,m}) \in W_m,
        \]
        where each component \(w_{j,\ell}\) is obtained by decoding the corresponding block of \(n_w\)
        qubits. \medskip
        \paragraph{Value register}
        Let the value register consist of \(n_Z\) qubits with \(f_Z\) fractional bits. For
        \(k\in\{0,\ldots,N_Z-1\}\), let
        \[
            k=\sum_{j=0}^{n_Z-1} b_j2^j, \qquad b_j\in\{0,1\},
        \]
        be the binary
        representation associated with the computational basis state \(\ket{k}_Z\). The corresponding signed
        fixed-point value is
        \[
            z_k = - b_{n_Z-1} 2^{\,n_Z-1-f_Z} + \sum_{j=0}^{n_Z-2} b_j\,2^{\,j-f_Z}.
        \]
        The Hilbert space of the value register is \( \mathcal H_Z := (\mathbb C^2)^{\otimes n_Z} \) with
        computational basis
        \[
            \{\ket{k}_Z\}_{k=0}^{N_Z-1}, \qquad N_Z = 2^{n_Z}.
        \]
        We denote by
        \begin{align}
            \label{def_value_register} Z=\{z_k\}_{k=0}^{N_Z-1}\subset 2^{-f_Z}\mathbb
            Z
        \end{align}
        the set of all values representable by this fixed-point encoding. \medskip
        \section{QFT-based arithmetic}
        \label{subsec:qft_polynomial}
        We use standard QFT-based reversible arithmetic primitives for modular addition and multiplication;
        see, e.g., \cite{VedralBarencoEkert:1996,Draper:2000,RuizPerez:2017}. For the residual oracle, we
        need only the resulting scaling for polynomial terms. We therefore record the following estimate for
        monomials of the form \(x^{k_x}w^{k_y}\).

        Let \(X\), \(W\), and \(Z\) be computational registers of sizes \(n_X\), \(n_W\), and \(n_Z\),
        respectively, and let \(N_Z=2^{n_Z}\). We consider the reversible modular map
        \begin{equation}
            \label{eq:poly_map} \ket{x}_X\ket{w}_W\ket{z}_Z \longmapsto \ket{x}_X\ket{w}_W \ket{z\oplus
            x^{k_x}w^{k_y}}_Z ,
        \end{equation}
        where the addition is modulo \(N_Z\). Throughout this estimate,
        arithmetic is understood at the level of the encoded integers; fixed-point scaling factors only
        change the rotation angles and not the asymptotic gate count.

        Applying the QFT to the target register \(Z\) turns the modular addition in \eqref{eq:poly_map} into
        multiplication by the phase
        \[
            \exp\!\left(2\pi i\,\frac{x^{k_x}w^{k_y}q}{N_Z}\right),
        \]
        where
        \(q\) denotes the Fourier-basis index of the \(Z\)-register. Expanding \(x\), \(w\), and \(q\) in
        binary expresses this phase as a product of multi-controlled phase rotations. Each contribution is
        controlled on at most \(k_x\) bits from the \(X\)-register, \(k_y\) bits from the \(W\)-register,
        and one Fourier bit of the \(Z\)-register. Repeated bit indices only merge or reduce controls and
        therefore do not affect the upper bound below.
        \begin{lemma}[Gate complexity of QFT-based polynomial arithmetic]
            \label{lemma:qft_polynomial_runtime} Let \(U_{\mathrm{poly}}\) denote the QFT-based reversible
            implementation of a monomial contribution
            \[
                c\,x^{k_x}w^{k_y}
            \]
            to the modular addition in
            \eqref{eq:poly_map}, where \(x\) and \(w\) are scalar variables encoded in \(n_X\) and \(n_W\)
            qubits, respectively. Counting arbitrary one-qubit gates and CNOT gates as elementary, as in Barenco
            et al.~\cite{Barenco:1995}, one has
            \[
                C(U_{\mathrm{poly}}) = O\!\left( n_Z
                n_X^{k_x}n_W^{k_y}(k_x+k_y+1)^2+n_Z^2 \right).
            \]
        \end{lemma}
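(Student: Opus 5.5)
We will follow the Draper-style structure QFT, phase rotations, inverse QFT on the target register \(Z\), and count the three pieces separately. The QFT on \(n_Z\) qubits and its inverse each cost \(O(n_Z^2)\) elementary gates: \(n_Z\) Hadamards and \(O(n_Z^2)\) controlled phase rotations, each decomposing into a constant number of one-qubit gates and CNOTs. This accounts for the additive \(n_Z^2\) term. The remaining work is to bound the cost of the diagonal phase block
\[
\ket{x}_X\ket{w}_W\ket{q}_Z \longmapsto \exp\!\left(2\pi i\,\frac{c\,x^{k_x}w^{k_y}q}{N_Z}\right)\ket{x}_X\ket{w}_W\ket{q}_Z .
\]

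First expand in binary: \(x=\sum_a x_a2^a\), \(w=\sum_b w_b2^b\), \(q=\sum_t q_t2^t\). Multiplying out \(x^{k_x}w^{k_y}q\) writes the phase exponent as a sum of products of the form
\[
x_{a_1}\cdots x_{a_{k_x}}\,w_{b_1}\cdots w_{b_{k_y}}\,q_t
\]
times dyadic constants. The total phase therefore factorizes into a product of phase gates, each applied only when all its control bits equal \(1\). Index tuples with repeated bits collapse to gates with fewer controls, since \(x_a^2=x_a\); these only merge terms and never increase the count. So the number of distinct multi-controlled phase gates is at most \(n_X^{k_x}\,n_W^{k_y}\,n_Z\).

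Second, each such gate has at most \(k:=k_x+k_y+1\) controls, counting the one Fourier bit as a control of the phase. By Barenco et al.~\cite{Barenco:1995}, a phase gate with \(k\) controls can be implemented with \(O(k^2)\) one-qubit gates and CNOTs, without requiring ancillas. Multiplying the count by the per-gate cost gives \(O(n_Z n_X^{k_x} n_W^{k_y}(k_x+k_y+1)^2)\). Adding the QFT cost then yields the claim. Fixed-point scalings and the coefficient \(c\) only change the rotation angles, so they do not affect the count.

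The main obstacle is the per-gate step. Barenco et al.\ state the quadratic bound for multi-controlled \(U\) gates, and we must check that it applies to a general diagonal phase. The argument is that the phase \(e^{i\phi}\) controlled on \(k-1\) bits and acting on the last one equals a \((k-1)\)-controlled \(\mathrm{diag}(1,e^{i\phi})\) gate. Barenco's Lemma 7.5 construction, based on recursive square roots, then gives the \(O(k^2)\) cost.

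A secondary point is to note explicitly that the bit-level enumeration counts ordered tuples. Counting ordered tuples only overcounts the distinct monomials, so the upper bound remains valid even though the true number of distinct gates may be smaller.
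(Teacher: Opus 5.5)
Your proposal is correct and follows essentially the same route as the paper. Both arguments use the binary expansion to obtain at most $n_Z n_X^{k_x} n_W^{k_y}$ multi-controlled phase gates, each acting on at most $k_x+k_y+1$ bits (the paper phrases this as $k_x+k_y$ controls on a Fourier target, which is the same thing), cost each gate at $O((k_x+k_y+1)^2)$ via Barenco et al., and add $O(n_Z^2)$ for the QFT and its inverse. One small precision: the quadratic bound you need is Barenco's Corollary~7.6, the reference the paper cites. It holds for arbitrary $U\in\mathrm{U}(2)$, so the diagonal-phase case needs no separate check. It also comes from combining Lemma~7.5 with the linear-cost, spare-bit simulation of the $(k-2)$-controlled NOT; Lemma~7.5's recursion alone does not yield $O(k^2)$.
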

        \begin{proof}
            Write
            \[
                x=\sum_{i=0}^{n_X-1}2^i x_i, \qquad w=\sum_{j=0}^{n_W-1}2^j w_j, \qquad
                x_i,w_j\in\{0,1\}.
            \]
            Then
            \[
                x^{k_x} = \sum_{i_1=0}^{n_X-1}\cdots \sum_{i_{k_x}=0}^{n_X-1}
                2^{i_1+\cdots+i_{k_x}} x_{i_1}\cdots x_{i_{k_x}},
            \]
            and analogously for \(w^{k_y}\). Hence,
            \(x^{k_x}w^{k_y}\) is a sum of at most \(n_X^{k_x}n_W^{k_y}\) bit monomials. Since repeated indices
            maybe removed using \(x_i^r=x_i\) and \(w_j^r=w_j\), each monomial depends on at most \(k_x+k_y\)
            distinct input bits.

            In the Fourier basis, each bit monomial induces at most one phase rotation on each of the \(n_Z\)
            target qubits. Thus, at most
            \[
                n_Z n_X^{k_x}n_W^{k_y}
            \]
            multi-controlled one-qubit phase gates are
            required. Each has atmost \(c=k_x+k_y\) controls. By Corollary~7.6 of Barenco et
            al.~\cite{Barenco:1995}, such a gate can be decomposed into
            \[
                O\!\left((c+1)^2\right) =
                O\!\left((k_x+k_y+1)^2\right)
            \]
            elementary one-qubit and CNOT gates. The phase accumulation
            therefore has complexity
            \[
                O\!\left( n_Z n_X^{k_x}n_W^{k_y}(k_x+k_y+1)^2 \right).
            \]
            Adding the
            \(O(n_Z^2)\) cost of the QFT and inverse QFT gives the stated bound.
        \end{proof}
        For signed fixed-point encodings, the sign bits only change the signs of the corresponding phase
        angles. The control structure, and hencethe asymptotic gate count in
        Lemma~\ref{lemma:qft_polynomial_runtime}, remains unchanged.

        \section{Reversible Implementations}
        \label{section:reversible_implementations}

        This appendix gives explicit reversible constructions of the residual, absolute-value, and
        threshold-subtraction operators \(U_r\), \(U_a\), and \(U_s\) entering
        \[
            U_c = (I_X\otimes
            I_W\otimes U_s\otimes I_A) (I_X\otimes I_W\otimes U_a) (U_r\otimes I_A).
        \]
        The register encodings
        are specified in Appendix~\ref{subsec_quantum_registers}.
        \subsection*{Reversible Absolute-Value Operator}
        \label{implementation_absolute_operator}
        The \(n_Z\)-qubit value register uses two's-complement fixed-point encoding with \(f_Z\) fractional
        bits. Its representable set \(Z\) is defined in \eqref{def_value_register}, with extremal values
        \[
            z_{\min}=-2^{\,n_Z-1-f_Z}, \qquad z_{\max}=2^{\,n_Z-1-f_Z}-2^{-f_Z}.
        \]
        Because \(|z_{\min}|\notin
        Z\), the reversible absolute-value operation uses two ancilla qubits,
        \[
            A=A_{\mathrm{sign}}\otimes
            A_{\mathrm{ind}},
        \]
        which store the input sign and indicate the exceptional value \(z_{\min}\),
        respectively.

        The unitary
        \[
            U_a:\mathcal H_Z\otimes\mathcal H_A \longrightarrow \mathcal H_Z\otimes\mathcal H_A
        \]
        acts on clean ancillas as
        \[
            U_a\!\left(
            \ket{z}_Z\ket{0}_{A_{\mathrm{sign}}}\ket{0}_{A_{\mathrm{ind}}} \right) = \ket{\widetilde{|z|}}_Z
            \ket{\sigma(z)}_{A_{\mathrm{sign}}} \ket{\eta(z)}_{A_{\mathrm{ind}}},
        \]
        where
        \[
            \sigma(z)=
            \begin{cases} 0, & z\geq 0, \\ 1, & z<0, \end{cases} \qquad \eta(z)= \begin{cases} 1, & z=z_{\min},
                \\ 0, & z\neq z_{\min},\end{cases}
            \]
            and
            \[
                \widetilde{|z|} = \begin{cases} |z|, & z\neq z_{\min},
                \\ z_{\max}, & z=z_{\min}.\end{cases}
            \]
            This action is extended bijectively to the remaining
            computational basis states.

            A reversible circuit first copies the sign into \(A_{\mathrm{sign}}\), detects \(z_{\min}\) in
            \(A_{\mathrm{ind}}\), and then, for non-exceptional negative inputs, conditionally applies bitwise
            inversion followed by an increment, for example with a QFT-based adder. For \(z=z_{\min}\), the
            output \(z_{\max}\) together with \(A_{\mathrm{ind}}=1\) preserves reversibility. Since \(z_{\min}\)
            is thereby mapped to a large positive value, it remains outside the admissible tolerance region
            \[
                |z|<\varepsilon_{\mathrm{tol}}
            \]
            and cannot be misclassified as a valid solution.
            \subsection*{Reversible Subtraction of the Tolerance Threshold}
            \label{implementation_threshold_operator}

            Let \(N_Z=2^{n_Z}\), and let
            \[
                k'=\sum_{s=0}^{n_Z-1}k'_s2^s, \qquad k'_s\in\{0,1\},
            \]
            be the
            integer encoding of \(-\varepsilon_{\mathrm{tol}}\). Threshold subtraction is the modular constant
            addition
            \[
                U_s:\mathcal H_Z\longrightarrow\mathcal H_Z, \qquad U_s\ket{k}_Z=\ket{k\oplus k'}_Z,
                \qquad k\oplus k':=(k+k')\bmod N_Z.
            \]
            On the admissible computational subspace used by the oracle,
            theregister range is chosen so that this modular operation coincides with ordinary subtraction and
            no wrap-around occurs.

            Using Draper addition,
            \[
                U_s = \mathrm{QFT}_{n_Z}^{-1}U_\phi\mathrm{QFT}_{n_Z}.
            \]
            For the \(r\)-th
            value qubit, define
            \[
                \Pi^{(0)}_{Z_r} = \ket{0}\bra{0}_{Z_r}\otimes I_{\mathrm{rest}}, \qquad
                \Pi^{(1)}_{Z_r} = \ket{1}\bra{1}_{Z_r}\otimes I_{\mathrm{rest}}.
            \]
            The bit \(k'_s\) contributes the
            phase \(2\pi k'_s2^{\,s+r-n_Z}\) to qubit \(Z_r\), so that
            \[
                U_\phi = \prod_{r=0}^{n_Z-1} \left(
                \Pi^{(0)}_{Z_r} + \exp\!\left[ 2\pi i\sum_{s=0}^{n_Z-1}k'_s2^{\,s+r-n_Z} \right] \Pi^{(1)}_{Z_r}
                \right).
            \]
            Thus, \(U_\phi\) has the same phase-accumulation structure as register-based QFT
            addition, except that the controls \(k'_s\) are classical bits of a fixed constant. More general
            out-of-place and register-based constructions are described in
            \cite{VedralBarencoEkert:1996,Draper:2000,RuizPerez:2017}.
            \subsection*{Reversible Residual Operator}
            \label{implementation_residual_operator}
            The residual operator acts on \(\mathcal H_X\otimes\mathcal H_W\otimes\mathcal H_Z\). The spatial
            register encodes grid points \(x_i\in X\) with spacing \(h=2^{-f_X}\) and has sufficient overflow
            capacity to represent\(x_i\pm h\).

            For
            \[
                r(x,w)=D_h^2u(x,w)+f\bigl(x,u(x,w)\bigr),
            \]
            let \(\rho(i,j)\) be the unique index satisfying
            \begin{equation}
                \label{definition_discretized_residual} z_{\rho(i,j)} = r(x_i,w_j) =
                D_h^2u(x_i,w_j)+f\bigl(x_i,u(x_i,w_j)\bigr).
            \end{equation}
            The unitary \(U_r=U_fU_{D^2}\)
            accumulates the encoded residual modulo \(N_Z\):
            \[
                U_r\ket{i}_X\ket{j}_W\ket{k}_Z =
                \ket{i}_X\ket{j}_W\ket{k\oplus\rho(i,j)}_Z,
            \]
            where \(\oplus\) denotes addition modulo \(N_Z\).
            Hence,
            \[
                U_r\ket{i}_X\ket{j}_W\ket{0}_Z = \ket{i}_X\ket{j}_W\ket{\rho(i,j)}_Z.
            \]
            The centered finite-difference approximation is
            \[
                D_h^2u(x_i,w_j) =
                \frac{u(x_i-h,w_j)-2u(x_i,w_j)+u(x_i+h,w_j)}{h^2}.
            \]
            Reversible translations of the spatial
            register provide coherentaccess to \(x_i-h\), \(x_i\), and \(x_i+h\), after which inverse
            translations restore the original spatial encoding. The three contributions are accumulated
            coherently in the value register.

            The ansatz is linear in \(w=(w^0,\ldots,w^{m-1})\) and polynomial in \(x\),
            \[
                u(x,w) =
                \sum_{\ell=0}^{m-1}w^\ell\phi_\ell(x), \qquad \phi_\ell(x) = \sum_{d=0}^{D_\ell}a_{\ell,d}x^d.
            \]
            Consequently, the finite-difference contribution consists of terms of the form \(c\,w^\ell x_i^d\),
            with fixed coefficients \(c\).

            The forcing term is assumed to be polynomial in its second argument,
            \begin{equation}
                \label{def:forcing_term} f(x,u)=\sum_{q=0}^{Q}b_q(x)u^q,
            \end{equation}
            where each \(b_q(x)\) is
            polynomial in \(x\). Substitution of the ansatz therefore yields a polynomial in \(x\) and \(w\),
            including monomials \(w^{j_1}\cdots w^{j_q}\).

            Let \(\delta(i,j)\) and \(\eta(i,j)\) encode \(D_h^2u(x_i,w_j)\) and \(f(x_i,u(x_i,w_j))\),
            respectively. Then
            \[
                U_{D^2}\ket{i}_X\ket{j}_W\ket{k}_Z =
                \ket{i}_X\ket{j}_W\ket{k\oplus\delta(i,j)}_Z,
            \]
            and
            \[
                U_f\ket{i}_X\ket{j}_W\ket{k}_Z =
                \ket{i}_X\ket{j}_W\ket{k\oplus\eta(i,j)}_Z.
            \]
            Thus,
            \[
                \rho(i,j)=\delta(i,j)\oplus\eta(i,j),
            \]
            and
            the composition \(U_r=U_fU_{D^2}\) has the stated action. Since both subroutines are reversible, the
            induced computational-basismap is bijective and therefore unitary.

            \bibliographystyle{plain} 
            \bibliography{bibliothek} 

@article{Grover:1996,
  author    = {Lov K. Grover},
  title     = {A Fast Quantum Mechanical Algorithm for Database Search},
  booktitle = {Proceedings of the 28th Annual ACM Symposium on Theory of Computing (STOC)},
  year      = {1996},
  pages     = {212--219},
  publisher = {ACM},
  doi       = {10.1145/237814.237866},
  url       = {https://doi.org/10.1145/237814.237866}
}

@article{Brassard:2000,
  author    = {Gilles Brassard and Peter H{\o}yer and Michele Mosca and Alain Tapp},
  title     = {Quantum Amplitude Amplification and Estimation},
  journal   = {Contemporary Mathematics},
  volume    = {305},
  pages     = {53--74},
  year      = {2002},
  publisher = {American Mathematical Society},
  note      = {arXiv:quant-ph/0005055}
}

@book{Nielsen:2010,
  title        = {Quantum Computation and Quantum Information},
  author       = {Nielsen, Michael A. and Chuang, Isaac L.},
  year         = {2010},
  edition      = {10th Anniversary Edition},
  publisher    = {Cambridge University Press},
  address      = {Cambridge}
}

@article{Draper:2000,
  author  = {Thomas G. Draper},
  journal = {Contemporary Mathematics},
  title   = {Addition on a Quantum Computer},
  howpublished = {\href{https://doi.org/10.48550/arXiv.quant-ph/0008033}{arXiv:quant-ph/0008033}},
}

@article{RuizPerez:2017,
  author = {Lidia Ruiz-Perez and Juan Carlos Garcia-Escartin},
  title = {Quantum arithmetic with the Quantum Fourier Transform},
  journal = {Quantum Information Processing},
  volume = {16},
  number = {1},
  pages = {152},
  year = {2017},
  doi = {10.1007/s11128-017-1603-1},
  url = {https://arxiv.org/abs/1411.5949}
}

@article{Seidel:2021,
  author = {Raphael Seidel and Nikolay Tcholtchev and Sebastian Bock and Colin Kai-Uwe Becker and Manfred Hauswirth},
  title = {Efficient Floating Point Arithmetic for Quantum Computers},
  journal = {IEEE Access},
  volume = {10},
  pages = {72400--72415},
  year = {2022},
  doi = {10.1109/ACCESS.2022.3187653},
  url = {https://arxiv.org/abs/2112.10537}
}

@article{Wang:2025,
  author = {Siyi Wang and Xiufan Li and Wei Jie Bryan Lee and Suman Deb and Eugene Lim and Anupam Chattopadhyay},
  title = {A Comprehensive Study of Quantum Arithmetic Circuits},
  journal = {Philosophical Transactions A: Mathematical, Physical and Engineering Sciences},
  volume = {383},
  number = {2288},
  pages = {20230392},
  year = {2025},
  doi = {10.1098/rsta.2023.0392},
  url = {https://arxiv.org/abs/2406.03867}
}

@book{Leveque:2007,
  title={Finite Difference Methods for Ordinary and Partial Differential Equations: Steady-State and Time-Dependent Problems},
  author={LeVeque, Randall J.},
  year={2007},
  publisher={SIAM},
  address={Philadelphia, PA}
}

@book{StoerBulirsch:2002,
  author    = {Stoer, J. and Bulirsch, R.},
  title     = {Introduction to Numerical Analysis},
  edition   = {3rd},
  publisher = {Springer},
  year      = {2002}
}

@book{AscherPetzold:1998,
  author    = {Ascher, U. M. and Petzold, L. R.},
  title     = {Computer Methods for Ordinary Differential Equations and Differential-Algebraic Equations},
  publisher = {SIAM},
  year      = {1998}
}

@article{HHL2009,
  author    = {Harrow, Aram W. and Hassidim, Avinatan and Lloyd, Seth},
  title     = {Quantum Algorithm for Linear Systems of Equations},
  journal   = {Physical Review Letters},
  volume    = {103},
  number    = {15},
  pages     = {150502},
  year      = {2009}
}

@article{ChildsKothariSomma2017,
  author    = {Childs, Andrew M. and Kothari, Robin and Somma, Rolando D.},
  title     = {Quantum algorithm for systems of linear equations with exponentially improved dependence on precision},
  journal   = {SIAM Journal on Computing},
  volume    = {46},
  number    = {6},
  pages     = {1920--1950},
  year      = {2017}
}

@article{Raissi:2019,
  author    = {Raissi, M. and Perdikaris, P. and Karniadakis, G. E.},
  title     = {Physics-Informed Neural Networks: A Deep Learning Framework for Solving Forward and Inverse Problems Involving Nonlinear Partial Differential Equations},
  journal   = {Journal of Computational Physics},
  volume    = {378},
  pages     = {686--707},
  year      = {2019}
}

@book{BrennerScott:2008,
  author    = {Susanne C. Brenner and L. Ridgway Scott},
  title     = {The Mathematical Theory of Finite Element Methods},
  edition   = {3},
  series    = {Texts in Applied Mathematics},
  volume    = {15},
  publisher = {Springer},
  year      = {2008},
  doi       = {10.1007/978-0-387-75934-0}
}

@book{Hesthaven:2007,
  author    = {Jan S. Hesthaven and Sigal Gottlieb and David Gottlieb},
  title     = {Spectral Methods for Time-Dependent Problems},
  series    = {Cambridge Monographs on Applied and Computational Mathematics},
  volume    = {21},
  publisher = {Cambridge University Press},
  year      = {2007},
  doi       = {10.1017/CBO9780511618352}
}

@article{DeRyck:2022,
  author  = {Tim De Ryck and Ameya D. Jagtap and Siddhartha Mishra},
  title   = {Error Estimates for Physics-Informed Neural Networks Approximating the Navier--Stokes Equations},
  journal = {IMA Journal of Numerical Analysis},
  volume  = {42},
  number  = {4},
  pages   = {3217--3248},
  year    = {2022},
  doi     = {10.1093/imanum/drab093}
}

@article{DeRyckMishra2024Survey,
  author  = {De Ryck, Tim and Mishra, Siddhartha},
  title   = {Numerical Analysis of Physics-Informed Neural Networks and Related Models in Physics-Informed Machine Learning},
  journal = {Acta Numerica},
  volume  = {33},
  pages   = {73--238},
  year    = {2024},
  doi     = {10.1017/S0962492924000015}
}

@article{ChildsLiu2020,
  author  = {Childs, Andrew M. and Liu, Jin-Peng},
  title   = {Quantum Spectral Methods for Differential Equations},
  journal = {Communications in Mathematical Physics},
  volume  = {375},
  number  = {2},
  pages   = {1427--1457},
  year    = {2020},
  doi     = {10.1007/s00220-019-03528-y}
}

@article{JinLiuYu2022,
  author  = {Jin, Shi and Liu, Nana and Yu, Yue},
  title   = {Quantum Simulation of Partial Differential Equations via Schr{\"o}dingerisation},
  journal = {Journal of Computational Physics},
  volume  = {464},
  pages   = {111282},
  year    = {2022},
  doi     = {10.1016/j.jcp.2022.111282}
}

@article{JinLiuYu2023,
  author  = {Jin, Shi and Liu, Nana and Yu, Yue},
  title   = {Quantum Simulation of Partial Differential Equations: Applications and Extensions},
  journal = {Acta Numerica},
  volume  = {32},
  pages   = {383--478},
  year    = {2023},
  doi     = {10.1017/S0962492923000080}
}

@article{SatoKadowaki2024,
  author  = {Sato, Yuki and Kadowaki, Tadashi},
  title   = {Quantum Algorithms for Partial Differential Equations Based on Hamiltonian Simulation},
  journal = {Physical Review Research},
  volume  = {6},
  number  = {1},
  pages   = {013112},
  year    = {2024},
  doi     = {10.1103/PhysRevResearch.6.013112}
}

@article{LiuEtAl2021,
  author  = {Liu, Jin-Peng and Kolden, H. and Krovi, Hari and Loureiro, N. F. and Trivisa, K. and Childs, Andrew M.},
  title   = {Efficient Quantum Algorithm for Dissipative Nonlinear Differential Equations},
  journal = {Proceedings of the National Academy of Sciences},
  volume  = {118},
  number  = {35},
  pages   = {e2026805118},
  year    = {2021},
  doi     = {10.1073/pnas.2026805118}
}

@article{Krovi2023,
  author  = {Krovi, Hari},
  title   = {Improved Quantum Algorithms for Linear and Nonlinear Differential Equations},
  journal = {Quantum},
  volume  = {7},
  pages   = {913},
  year    = {2023},
  doi     = {10.22331/q-2023-02-02-913}
}

@article{JinLiuYu2023b,
  author  = {Jin, Shi and Liu, Nana and Yu, Yue},
  title   = {Hamiltonian Simulation for Nonlinear Partial Differential Equations},
  journal = {Communications in Computational Physics},
  volume  = {34},
  number  = {1},
  pages   = {148--182},
  year    = {2023},
  doi     = {10.4208/cicp.OA-2022-0217}
}

@article{Panichi2026,
  author  = {Panichi, Lorenzo and Wierichs, David and Meyer, Jakob J. and others},
  title   = {Quantum Physics-Informed Neural Networks},
  journal = {Quantum Machine Intelligence},
  volume  = {8},
  number  = {1},
  pages   = {15},
  year    = {2026},
  doi     = {10.1007/s42484-025-00145-3}
}

@article{Berger2025,
  author  = {Berger, Lukas and Friesdorf, Martin},
  title   = {Variational Quantum PDE Solvers Based on Residual Minimization},
  journal = {Quantum Information Processing},
  volume  = {24},
  number  = {3},
  pages   = {91},
  year    = {2025},
  doi     = {10.1007/s11128-025-04512-z}
}

@article{MontanaroPallister2016, author = {Montanaro, Ashley and Pallister, Sam}, title = {Quantum Algorithms and the Finite Element Method}, journal = {Physical Review A}, volume = {93}, number = {3}, pages = {032324}, year = {2016}, doi = {10.1103/PhysRevA.93.032324} }

@article{ChildsLiuOstrander2021, author = {Childs, Andrew M. and Liu, Jin-Peng and Ostrander, Alex}, title = {High-Precision Quantum Algorithms for Partial Differential Equations}, journal = {Quantum}, volume = {5}, pages = {574}, year = {2021}, doi = {10.22331/q-2021-11-10-574} }

@article{Barenco:1995,
author = {Barenco, Adriano and Bennett, Charles H. and Cleve, Richard and DiVincenzo, David P. and Margolus, Norman and Shor, Peter and Sleator, Tycho and Smolin, John and Weinfurter, Harald},
title = {Elementary Gates for Quantum Computation},
journal = {Physical Review A},
volume = {52},
number = {5},
pages = {3457--3467},
year = {1995},
doi = {10.1103/PhysRevA.52.3457},
eprint = {quant-ph/9503016}
}

@article{VedralBarencoEkert:1996,
  author  = {Vedral, Vlatko and Barenco, Adriano and Ekert, Artur},
  title   = {Quantum Networks for Elementary Arithmetic Operations},
  journal = {Physical Review A},
  volume  = {54},
  number  = {1},
  pages   = {147--153},
  year    = {1996},
  doi     = {10.1103/PhysRevA.54.147}
}
            \end{document}